\begin{document}

\title{Nonlinear coherent loss for generating non-classical states}

\author{ A. Mikhalychev, D. Mogilevtsev and S. Kilin}

\affiliation{Institute of Physics, Belarus National Academy of
Sciences, Nezavisimosty Ave. 68, Minsk 220072 Belarus }

\begin{abstract}
Here we discuss generation of non-classical  states of bosonic
mode with the help of artificially designed loss, namely the
nonlinear coherent loss. We show how to generate superpositions of
Fock states, and how it is possible to "comb" the initial states
leaving only states with certain properties in the resulting
superposition (for example, a generation of a superposition of
Fock states with odd number of particles). We discuss purity of
generated states and estimate maximal achievable generation
fidelity.
\end{abstract}

\pacs{03.65.Yz, 42.50.Lc}
 \maketitle

\section{Introduction}

Artificially designed nonlinear loss is a rather novel and highly
promising way to generate deterministically non-classical states.
Nonlinear loss possesses a number of advantages over more usual
nonlinear unitary dynamics. First of all, it can turn a mixed
initial state into a final pure one, which is, generally,
impossible with unitary transformations. Then, the artificial loss
is, actually, turning a greatest enemy of the non-classical state
generators into their biggest ally. In schemes of generation of
non-classical states via unitary dynamics linear losses are prone
to destroy quantum features rather quickly (here one might recall
famous "Schroedinger-cat states" turning into mixtures with the
rate proportional to the average particle number of the state
\cite{cats}). However, nonlinear losses can force the system into
a certain non-vacuum state ("pointer state"
\cite{{zurek},{zurek1}}), and make it remain there despite
possible perturbations (which was epitomized in a concept of
"protecting" quantum states
\cite{{davidovich1},{moya-cessa-2000},{davidovich-2001}}). Then,
the nonlinear loss generation scheme can be extremely robust, and
is able to turn a wide variety of initial states into a predefined
output state \cite{krauss}. For example, one can produce
"Schroedinger-cat states" \cite{garraway}, and Fock states
\cite{poyatos} this way. The nonlinear loss can be a flexible and
efficient tool of quantum computation \cite{cirac}.

Recently, significant progress was achieved in development of
practical realizable schemes of artificial nonlinear loss. It was
shown that it is possible to produce a wide variety of nonlinear
losses for vibrational states of ions in magnetic traps
\cite{{zoller},{wineland}}, or atoms in optical lattices
\cite{zwerger}. Using a combination of optical nonlinearities, one
can produce nonlinear losses of electromagnetic field modes
\cite{{yamamoto},{bose},{ezaki},{hong}}. Implementing correlated
linear loss, one can also produce an effective nonlinear loss in
Bose-Einstein condensates \cite{valera}. Correlated loss gives an
opportunity to design nonlinear losses for the generation of
single-photon states and strongly sub-Poissonian states in
multi-core optical fibers \cite{mogilevtsev opt lett}.

This work is devoted to analysis of possible states which can be
produced with the help of designed nonlinear coherent loss quite
commonly encountered in the schemes mentioned above. This kind of
loss is able to produce a wide range of non-classical states. We
discuss conditions and prerequisites for generation of such
states, provide an estimation of fidelity and purity of generated
states. In particular, we show how to generate an arbitrary Fock
state with fidelity arbitrarily close to unity using nonlinear
coherent loss and an input coherent state. Also, we analyze the
generation of finite or countable superpositions of Fock states.
It appears that one cannot reach a unit fidelity for generation
of superposition of two Fock states. The fidelity of such
superposition generation gets worse when the states of the pair
are strongly different in particle numbers. On the other hand, we
show that it is possible to cut a countable set of Fock-state
components from the initial state with almost perfect fidelity,
thus effectively "combing" the input state and producing almost
pure state as the result.

The outline of the paper is as follows. In
Section~\ref{section0} we describe nonlinear coherent states  and
introduce a nonlinear coherent loss (NCL). Here also the the master
equation for coherent losses is given.

In Section~\ref{sec1} we investigate general properties of the NCL
and analyze their connection with features of a stationary state
which is to be obtained with the help of the considered kind of the
NCL.

In Section~\ref{sec2} we discuss the generation of pure Fock states and
consider the problem of coherence preserving for two important
classes of states that can be generated by the discussed approach
starting from a coherent state: superpositions of two Fock states
(either including vacuum state or not) and superpositions of states
with photon numbers distributed with equal intervals (e.g. states
with even or odd photon numbers).

In Section~\ref{sec3} the conditions, necessary for complete
preservation of coherence, are derived. It is shown that coherent
states taken as starting states for the considered type of
evolution do not satisfy the conditions exactly and, therefore,
cannot be used for creation of completely pure nonclassical states
by nonlinear absorption. However, an optimal \emph{classical}
starting state belongs to the class of coherent states, and the
final state, arbitrarily close to pure state, can be generated by
choosing sufficiently high amplitude of the starting coherent
state.

\section{Nonlinear coherent loss}
\label{section0}

In our work we consider dynamics of a single bosonic mode,
described by the following standard master equation in the
Lindblad form:
\begin{equation}
\label{eqn1} \begin{aligned} \frac{ d \rho(t)} {dt} =    \gamma
\left(2\hat L\rho(t)\hat L^+ - \hat L^+ \hat L \rho(t) - \rho(t)
\hat L^+ \hat L\right),
\end{aligned}
\end{equation}
where $\gamma >0$ is the decay rate and $\hat L$ is the Lindblad
operator. We assume that our bosonic mode is described by the
creation and annihilation operators $\hat a^+$ and $\hat a$. We
shall consider the non-unitary dynamics generated by the following
general class of Lindblad operators
 \begin{equation}
 \hat L={\hat a}f(\hat n)
 \label{operator f coherent }
 \end{equation}
where $\hat n = \hat a^+ \hat a$, and $f(n)$ is a non-negative
function.

The Lindblad operator (\ref{operator f coherent }) can be
considered as the annihilation operator of so-called $f$-deformed
harmonic oscillator with the commutation relations \cite{manko}
\[[L,L^{\dagger}]=(\hat n+1)(f(\hat n+1))^2-{\hat n}(f(\hat n))^2.
\]
Eigenstates of the operator $L$ were termed "nonlinear coherent
states" \cite{vogel} (it is curios to note that a specific
subclass termed "Mittag-Lefler coherent states" does actually
arise in micromasers in presence of loss and incoherent pump
\cite{tanya} ). For that reason, we refer to the decay described by the Lindblad operator (\ref{operator f coherent }) as "nonlinear coherent loss"
(NCL) in further consideration. It is interesting that any pure state non-orthogonal to an arbitrary Fock states can be exactly represented as a nonlinear coherent state \cite{davidovich-2001}. If it is orthogonal to some
Fock states (for example, if this pure state is a finite
superposition of Fock states), then one can still build a
nonlinear coherent state closely approximating the state in
question \cite{davidovich-2001}.

Thus, NCL looks highly promising for non-classical state
generation. In Ref. \cite{davidovich-2001} is was shown how to
build a function $f$ leading to an approximate generation of an
arbitrary Fock state from the initial coherent one. In more recent
work \cite{hong}, it was shown that the single-particle Fock
state, $|1\rangle$, can be generated with arbitrarily high
fidelity from the initial coherent state by the NCL with the
Lindblad operator (\ref{operator f coherent }) with $f(\hat
n)=\hat n-1$.

It should be emphasized that NCL design is completely
realistic and could be realized in practice. For vibrational
states of ions in magnetic traps NCL is already realized
\cite{{zoller},{wineland}}. Recently it was shown how to produce
the NCL by simple adjustment of well parameters in the three-well
trap configuration for Bose-Einstein condensates \cite{valera}.
Very recently the new realistic way of realizing NCL in
multi-core optical fiber was suggested \cite{mogilevtsev opt
lett}. It was shown that for experimentally realistic values of
Kerr nonlinearity of chalcogenide fibers with subwavelength core
(which is $10^5$ times higher than Kerr nonlinearity of a
conventional fused silica optical fiber \cite{{egg},{romanova}}),
it is possible to realize NCL in such a scheme and achieve a
deterministic generation of a single-photon state
\cite{mogilevtsev opt lett}.

Note that the losses and imperfections of the scheme,
generally, do not spoil the desired form of NCL (however,
they can lead to the appearance of other losses, both linear and
nonlinear, spoiling the effect of NCL). The form of NCL is
defined by the nonlinearity present in the scheme. For
illustration in the Appendix A an example of the nonlinear loss
appearance (and NCL appearance, in particular) is given for
the system of nonlinear bosonic modes coupled to the strongly
dissipative mode.

In this work we do not intend to discuss practical
realization of NCL in more details. Our aim is more fundamental; we want to
discuss the very possibilities offered by NCL. Further in this
work we discuss general limitations on the states that can be
generated by NCL starting from a coherent state. We demonstrate
when it is possible to generate pure superpositions of Fock
states, and when losses lead to the decrease of the states purity.

\section{General properties of function $f(n)$ and corresponding
stationary states} \label{sec1}

In order to analyze dependence between general properties of the
function $f(n)$, describing nonlinear absorption, and stationary
states that can be obtained as a result of evolution, characterized
by the master equation Eq.~(\ref{eqn1}), it is convenient to
decompose the density matrix in terms of Fock states:
\begin{equation}
\label{eqn2} \rho(t) = \sum_{n,m} \rho_{nm}(t)
\left|{n}\mathrel{\left\rangle{\vphantom{n m}}\right\langle
\kern-\nulldelimiterspace}{m}\right|.
\end{equation}
Then, the master equation  Eq.~(\ref{eqn1}) leads to the following
system of equations for density matrix elements:
\begin{equation}
\label{eqn3}\begin{aligned} \frac{d \rho_{nm}(t)} {dt} =
 2 \gamma F(n+1)
F(m+1) \rho_{n+1,m+1}(t) -{} \\ {} -  \gamma \{F^2 (n) + F^2(m)\}
\rho_{nm}(t),
\end{aligned}
\end{equation}
where $F(n) = \sqrt{n} f(n) \ge 0$, $F(0) = 0$.

Note that, according to the system (\ref{eqn3}), different
diagonals of the density matrix evolve independently. This fact
can be made more apparent by introducing notation
\begin{equation}
\label{eqn4} \rho_{n,n+k}(t) =c_k \xi_k(n,t),
\end{equation}
where constants $c_k$ can be arbitrary and will be fixed later.
Quantity $\xi_k(n,t)$ satisfies the following equation:
\begin{equation}
\label{eqn5} \begin{aligned} \frac{d \xi_k(n,t)} {dt} = 2 \gamma
F(n+1) F(n+k+1) \xi_k(n+1,t) -{} \\ {} -  \gamma \{F^2 (n) +
F^2(n+k)\} \xi_k(n,t),
\end{aligned}
\end{equation}
which does not contain $\xi_{k'}(n',t)$ for $k' \ne k$.

The density matrix is  the Hermitian one. It is completely defined
by elements $\rho_{nm}$ with $m\ge n$. Further, we will take into
consideration only the main diagonal of the density matrix and the
diagonals, lying \emph{below} the main diagonal. We assume $k\ge
0$ in Eqs.~(\ref{eqn4}), (\ref{eqn5}) and similar equations.

\newtheorem{lemma}{Lemma}
\begin{lemma}
\label{lemma1} If quantities $\xi_k(n,t)$ are positive
(non-negative) for all $n$ at $t=0$, they remain positive
(non-negative) for all $t>0$.
\end{lemma}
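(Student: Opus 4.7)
The plan is to treat (\ref{eqn5}) as an infinite system of linear ODEs of the form
\begin{equation*}
\dot{\xi}_k(n,t) = \beta_k(n)\,\xi_k(n+1,t) - \alpha_k(n)\,\xi_k(n,t),
\end{equation*}
where $\beta_k(n) = 2\gamma F(n+1)F(n+k+1) \ge 0$ and $\alpha_k(n) = \gamma[F^2(n)+F^2(n+k)] \ge 0$. The key structural observation is that the ``gain'' coefficient $\beta_k(n)$ is non-negative, so the only way for $\xi_k(n,\cdot)$ to decrease is through the diagonal damping term, which vanishes precisely when $\xi_k(n,\cdot)$ itself does.

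For the non-negativity statement, I would use a first-crossing (maximum-principle) argument. Suppose $\xi_k(n,0) \ge 0$ for every $n$, but some $\xi_k$ becomes negative for some $t>0$. Let $t^\ast$ be the infimum of times at which some component hits zero and is about to cross into negative values, and let $n^\ast$ be the corresponding index. By continuity of the solution and the definition of $t^\ast$, one has $\xi_k(n,t^\ast) \ge 0$ for every $n$ and in particular $\xi_k(n^\ast+1,t^\ast)\ge 0$, while $\xi_k(n^\ast,t^\ast)=0$. Evaluating the ODE at $(n^\ast,t^\ast)$ gives
\begin{equation*}
\dot{\xi}_k(n^\ast,t^\ast) = \beta_k(n^\ast)\,\xi_k(n^\ast+1,t^\ast) \ge 0,
\end{equation*}
which contradicts the assumption that $\xi_k(n^\ast,t)$ is about to become negative just after $t^\ast$.

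For the strict positivity statement, I would use the integrating-factor representation obtained from the same equation:
\begin{equation*}
\xi_k(n,t) = e^{-\alpha_k(n)t}\,\xi_k(n,0) + \beta_k(n)\!\int_0^t\! e^{-\alpha_k(n)(t-s)}\,\xi_k(n+1,s)\,ds.
\end{equation*}
Given that the non-negativity part has already secured $\xi_k(n+1,s)\ge 0$ for all $s\in[0,t]$, the integral term is non-negative. If $\xi_k(n,0)>0$ then the first term is strictly positive, so $\xi_k(n,t)>0$, as claimed.

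The main obstacle is justifying that the infimum $t^\ast$ is actually attained at some finite pair $(n^\ast,t^\ast)$, since the system is infinite-dimensional. This is resolved by appealing to the fact that $\rho(t)$ is a trace-class density matrix, so $\xi_k(n,t)\to 0$ as $n\to\infty$ for each fixed $t$ fast enough that any local sign change must occur at a finite index; together with the continuity of each $\xi_k(n,\cdot)$, this legitimizes the first-crossing argument. Once this technical point is handled, both parts of the lemma follow immediately from the two paragraphs above.
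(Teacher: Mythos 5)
Your proposal is correct and is essentially the paper's own argument in integrated form: the Duhamel representation $\xi_k(n,t)=e^{-\alpha_k(n)t}\xi_k(n,0)+\beta_k(n)\int_0^t e^{-\alpha_k(n)(t-s)}\xi_k(n+1,s)\,ds$ is exactly the paper's substitution $\tilde\xi_k(n,t)=\xi_k(n,t)e^{\gamma t(F^2(n)+F^2(n+k))}$, under which the damping term disappears and the derivative becomes a manifestly non-negative multiple of the neighbouring component. Your explicit first-crossing argument merely spells out the step the paper leaves implicit ("non-negative derivative implies non-decreasing"), and the delicate point you flag --- attainment of the infimum over the infinite index set --- is glossed over in the paper as well, so the two proofs stand on the same footing.
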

\begin{proof}
Eq.~(\ref{eqn5}) can be rewritten as
\begin{equation}
\label{eqn6}\begin{aligned} \frac{d \tilde\xi_k(n,t) }{dt}  = 2
\gamma F(n+1)F(n+k+1)\tilde\xi_k(n+1,t)   \times \\
{} \times e^{\gamma t (F^2(n) + F^2(n+k)- F^2(n+1) - F^2(n+k+1))}
,
\end{aligned}
\end{equation}
where $\tilde\xi_k(n,t) = \xi_k(n,t) e^{\gamma t (F^2(n) +
F^2(n+k))}$. Right-hand side of Eq.~(\ref{eqn6}) is non-negative
for positive or non-negative values of quantities
$\tilde\xi_k(n,t)$. It leads to non-negativity of derivatives on the
left-hand side of Eq.~(\ref{eqn6}) and non-decreasing character of
evolution of the quantities $\tilde\xi_k(n,t)$, if they are
initially non-negative. The exponential factor, connecting
$\tilde\xi_k(n,t)$ and $\xi_k(n,t)$ is strictly positive, and,
therefore, quantities $\xi_k(n,t)$ (together with
$\tilde\xi_k(n,t)$) preserve their positivity (non-negativity)
during evolution.
\end{proof}

If the initial state of the considered field mode is the coherent
state $|\alpha\rangle$, the condition of Lemma~\ref{lemma1} can be
satisfied by setting $c_k = (\alpha ^\ast)^k$ in Eq.~(\ref{eqn4}).
Then \[\xi_k(n,0) = \rho_{n,n+k} (0) / c_k = |\alpha|^{2n}
e^{-|\alpha|^2} / \sqrt{n! (n+k)!} > 0\] for all $n$, $k$.

It should be noted that any density matrix can be represented in
diagonal form in terms of coherent states using the
Glauber-Sudarshan $P$-representation \cite{glauber-1963}
\begin{equation}
\label{eqn10} \rho(0) = \int P(\alpha)
\left|{\alpha}\mathrel{\left\rangle{\vphantom{\alpha
\alpha}}\right\langle \kern-\nulldelimiterspace}{\alpha}\right|,
\end{equation}
where $P(x)$ is the Glauber-Sudarshan quasiprobability
distribution. Moreover, one can closely approximate the state in
question using a discrete set of coherent-state projectors on a
square lattice \cite{mogilevtsev prl}. Therefore, one may
conclude: all results derived here for an initial coherent state
conditions will be valid for an arbitrary initial state.
\begin{lemma}
\label{lemma2} If $\xi_k(n,0)>0$ for all $n$, for the stationary
value $\xi_k(n_1) = \lim_{t\rightarrow \infty} \xi_k(n_1,t)$ of
the quantity $\xi_k(n_1,t)$ to be non-zero, it is necessary that
$F(n_1) = F(n_1 + k)$.
\end{lemma}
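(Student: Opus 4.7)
My plan is to build a Lyapunov-type quantity whose time derivative exposes the obstruction directly. Let $S_k(t) = \sum_n \xi_k(n,t)$. By Lemma~\ref{lemma1} each summand is positive, so $S_k$ is a sum of positive terms. Differentiating $S_k$ using Eq.~(\ref{eqn5}) gives a gain term and a loss term; shifting the index $n \mapsto n-1$ in the gain term (using $F(0)=0$ to keep the range of summation aligned) should collapse the right-hand side to the manifestly non-positive quadratic form
\[
\frac{dS_k}{dt} = -\gamma \sum_n \bigl(F(n)-F(n+k)\bigr)^2 \xi_k(n,t),
\]
since $2F(n)F(n+k) - F^2(n) - F^2(n+k) = -(F(n)-F(n+k))^2$.

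Given this identity the conclusion of the Lemma follows almost immediately. In the stationary regime every $\xi_k(n,t)$ is $t$-independent, hence $dS_k/dt = 0$. Because the right-hand side above is a sum of non-negative terms (Lemma~\ref{lemma1} guarantees $\xi_k(n) \ge 0$ in the limit), each term must vanish separately:
\[
\bigl(F(n)-F(n+k)\bigr)^2 \xi_k(n) = 0 \qquad \text{for every } n.
\]
Specializing to $n=n_1$ with $\xi_k(n_1)\neq 0$ yields the required equality $F(n_1)=F(n_1+k)$.

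The hard part, as far as I can see, is not conceptual but technical: I have to justify termwise differentiation of the infinite sum and ensure $S_k(t)$ is actually finite so the manipulation is legal. For the coherent initial state with $c_k=(\alpha^{\ast})^k$ one has $\xi_k(n,0)=|\alpha|^{2n}e^{-|\alpha|^2}/\sqrt{n!(n+k)!}$, which is plainly summable; the decay formula above then keeps $S_k$ bounded for all $t>0$, and a standard dominated-convergence argument legitimizes the termwise operations. The extension to arbitrary initial states noted after Lemma~\ref{lemma1} reduces to this case via the Glauber--Sudarshan $P$-representation.
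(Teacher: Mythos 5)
Your proof is correct and is essentially identical to the paper's: the authors likewise sum Eq.~(\ref{eqn5}) over $n$, use $F(0)=0$ to align the shifted gain term, obtain the non-positive quadratic form $-\gamma\sum_n(F(n)-F(n+k))^2\xi_k(n,t)$, and invoke Lemma~\ref{lemma1} to force each term to vanish in the stationary state. Your closing remarks on summability and termwise differentiation go slightly beyond the paper's level of rigor but do not change the argument.
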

\begin{proof}
Summing Eq.~(\ref{eqn5}) over $n$ leads to the following relation:
\begin{equation}
\label{eqn7} \frac{d}{dt} \sum_{n=0}^\infty \xi_k(n,t) = - \gamma
\sum_{n=0}^\infty \left\{ F(n) - F(n+k) \right\}^2 \xi_k(n,t),
\end{equation}
where we have taken into account that $F(0) = 0$. For the
stationary state the left-hand side of Eq.~(\ref{eqn7}) equals
zero. According to Lemma~\ref{lemma1}, each term on the right-hand
side of Eq.~(\ref{eqn7}) is non-negative, and the sum can be equal
to zero in stationary state, when for each $n$ either $\xi_k(n) =
\lim_{t\rightarrow \infty} \xi_k(n,t) = 0$ or $F(n) = F(n + k)$.
\end{proof}

\newtheorem{theorem}{Theorem}
\begin{theorem}
\label{theorem} Density matrix element $\rho_{n,n+k} (t)$ can have
non-zero value $ \lim_{t\rightarrow \infty} \rho_{n,n+k} (t) \ne
0$ in the stationary state of evolution, described by the master
equation Eq.~(\ref{eqn3}), only if
\begin{equation}
\label{eqn11} F(n) = F(n+k) = 0.
\end{equation}
\end{theorem}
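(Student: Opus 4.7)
The plan is to strengthen Lemma~\ref{lemma2}: that lemma only forces $F(n)=F(n+k)$ on the support of a stationary $\xi_k$, whereas Theorem~\ref{theorem} demands the sharper conclusion that this common value is zero. I would obtain the extra information by identifying a discrete flux in the index $n$ that is conserved by the stationary equation and vanishes at the physical boundary $n=0$.

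Concretely, I introduce
\begin{equation*}
J(n) := F(n)\,F(n+k)\,\xi_k(n).
\end{equation*}
Setting $d\xi_k(n,t)/dt=0$ in Eq.~(\ref{eqn5}) gives the stationary relation
\begin{equation*}
2F(n+1)F(n+k+1)\,\xi_k(n+1) = \bigl[F^2(n)+F^2(n+k)\bigr]\xi_k(n).
\end{equation*}
I then split into two cases. If $\xi_k(n)>0$ in the stationary state, Lemma~\ref{lemma2} supplies $F(n)=F(n+k)$, so the bracket on the right collapses to $2F(n)F(n+k)$ and the relation becomes $J(n+1)=J(n)$. If instead $\xi_k(n)=0$, then $J(n)=0$ and the right-hand side vanishes, forcing $J(n+1)=0$ as well. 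Either way $J(n+1)=J(n)$, so $J$ is independent of $n$.

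The boundary input $F(0)=0$ recorded just after Eq.~(\ref{eqn3}) now finishes the argument: $J(0)=F(0)F(k)\xi_k(0)=0$, so $J(n)\equiv 0$. Any $n$ with a nonzero stationary $\xi_k(n)$ must therefore satisfy $F(n)F(n+k)=0$, and combined with Lemma~\ref{lemma2}'s equality $F(n)=F(n+k)$ this forces $F(n)=F(n+k)=0$. Since $\rho_{n,n+k}=c_k\xi_k(n)$ by the definition~(\ref{eqn4}), the theorem follows.

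The delicate point --- and the main obstacle --- is choosing the right conserved quantity. A naive candidate like $F^2(n)\xi_k(n)$ is not symmetric under $n\leftrightarrow n+k$ and is not conserved. The symmetrized product $F(n)F(n+k)$ works precisely because Lemma~\ref{lemma2} pins down the AM--GM inequality $\tfrac12[F^2(n)+F^2(n+k)]\ge F(n)F(n+k)$ to equality on the support of $\xi_k$, turning the stationary condition into an exact discrete continuity equation whose boundary value at $n=0$ is zero.
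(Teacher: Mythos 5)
Your argument is correct and is essentially the paper's own proof in locally telescoped form: your conserved flux $J(n)=F(n)F(n+k)\xi_k(n)$ is exactly the quantity appearing in the paper's Eq.~(\ref{eqn8}), which is nothing but the sum over $m=0,\dots,n-1$ of your step relation $J(m+1)=J(m)$, and both derivations reduce, via Lemma~\ref{lemma2} and the boundary condition $F(0)=0$, to Eq.~(\ref{eqn9}), i.e.\ to $F(n)F(n+k)\xi_k(n)=0$. The only piece you omit is the paper's closing step: your appeal to Lemma~\ref{lemma2} presupposes $\xi_k(n,0)>0$ for all $n$ (an initial coherent state with $c_k=(\alpha^\ast)^k$), so to obtain the theorem for an arbitrary initial state one must still invoke the $P$-representation~(\ref{eqn10}) and the linearity of the evolution, as the paper does in the second paragraph of its proof.
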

\begin{proof}
If the initial state of the field mode is a coherent state $\rho(0)
=  \left|{\alpha}\mathrel{\left\rangle{\vphantom{\alpha
\alpha}}\right\langle \kern-\nulldelimiterspace}{\alpha}\right|$,
conditions of Lemmas~\ref{lemma1}, \ref{lemma2} are satisfied by
setting $c_k = (\alpha ^\ast)^k$. Therefore, for $
\lim_{t\rightarrow \infty} \rho_{n,n+k} (t) \ne 0$ it is necessary
that $F(n) = F(n+k)$. Eq.~(\ref{eqn5}) implies that the following
relation is satisfied:
\begin{equation}
\label{eqn8} \begin{aligned} \frac{d}{dt} \sum_{m=0}^{n-1}
\xi_k(m,t) = 2 \gamma F(n) F(n+k) \xi_k(n,t) -{} \\ {} - \gamma
\sum_{m=0}^{n-1} \left\{ F(m) - F(m+k) \right\}^2 \xi_k(m,t).
\end{aligned}
\end{equation}
In the limit $t\rightarrow \infty$ Eq.~(\ref{eqn8}) is transformed
into
\begin{equation}
\label{eqn9} 2 \gamma F^2(n) \xi_k(n) = 0,
\end{equation}
because, according to Lemma~\ref{lemma2}, for all $m$ either $F(m)
- F(m+k) = 0$ or $\xi_k(m) \equiv \lim_{t\rightarrow \infty}
\xi_k(m,t) = 0$. Therefore, for $ \lim_{t\rightarrow \infty}
\rho_{n,n+k} (t) = c_k \xi_k(n) \ne 0$, it is necessary that $F(n)
= F(n + k)$ and $F(n)=0$.

Taking into account that any initial state of the considered mode
can be represented using $P$-representation, one concludes that
 the quantity $\lim_{t\rightarrow \infty}
\rho_{n,n+k} (t)$ can have non-zero value, only if it is non-zero
in the stationary state for at least one coherent state $|\alpha
\rangle$ taken as the initial state. Then the first paragraph of
the Proof is applicable, and we obtain $F(n)=F(n+k) =0 $ as
necessary condition.
\end{proof}

In order to derive explicit expressions for  stationary values of
non-zero elements of the density matrix, one can rearrange system
of equations (\ref{eqn5}) in the following way. By introducing
quantities
\begin{equation}
\label{eqn12} T_k(n) = \frac {2 F(n) F(n+k)} {F^2(n)+ F^2(n+k)}
\end{equation}
Eq.~(\ref{eqn5}) can be transformed as
\begin{equation}
\label{eqn13} \begin{aligned} \frac{d \xi_k(n,t)} {dt} = \gamma
T_k(n+1) \Phi_k(n+1) \xi_k(n+1,t) -{} \\ {} -  \gamma \Phi_k(n)
\xi_k(n,t),
\end{aligned}
\end{equation}
where $\Phi_k(n) = F^2(n)+ F^2(n+k)$. Then, it is quite easy to
show that the following equality holds:
\begin{equation}
\label{eqn14}\begin{aligned} \frac{d}{dt} \sum_{m=n_1}^{n_2-1}
\xi_k(m,t) T_k(n_1+1) \cdot ... \cdot T_k(m-1) T_k(m) ={} \\{}  =  2
\gamma F(n_2) F(n_2+k) \xi_k(n_2,t) - \gamma\Phi_k(n_1)
\xi_k(n_1,t).
\end{aligned}
\end{equation}

Let the number $n_1$ correspond to the quantity $\xi_k(n_1,t)$,
which has non-zero stationary value $\xi_k(n_1) =
\lim_{t\rightarrow \infty} \xi_k(n_1,t)$ (i.e. the conditions
$F(n_1) = 0$ and $F(n_1 + k) = 0$ are satisfied). We can choose
$n_2$ to be the minimal number greater than $n_1$, for which at
least one of the following equations is satisfied: $F(n_2) = 0$ or
$F(n_2 + k) = 0$ ($n_2$ can be equal to infinity; then
$\xi_k(n_2,t) \rightarrow 0$). For this choice of $n_1$ and $n_2$
the right-hand side of Eq.~(\ref{eqn14}) equals zero. The
left-hand side of Eq.~(\ref{eqn14}) remains constant during
evolution. In the limit $t\rightarrow \infty$ we obtain
\begin{equation}
\label{eqn15}\begin{aligned} \xi_k(n_1) = \xi_k(n_1,0) +
\xi_k(n_1+1,0) T_k(n_1+1) + {} \\ {} + ... + \xi_k(n_2-1,0)
T_k(n_1+1) \cdot ... \cdot T_k(n_2-1).
\end{aligned}
\end{equation}
Using Eq.~(\ref{eqn4}) and returning to the density matrix
elements, one can rewrite Eq.~(\ref{eqn15}) as
\begin{equation}
\label{eqn16} \begin{gathered} \rho_{n_1,n_1+k}(\infty){} =
\rho_{n_1,n_1+k}(0) + {}
\\ {} + \rho_{n_1+1,n_1+k+1}(0) T_k(n_1+1) + ... + {}
\\  {} +  \rho_{n_2-1,n_2+k-1}(0) T_k(n_1+1) \cdot ... \cdot T_k(n_2-1).
\end{gathered}
\end{equation}

The obtained expression for non-zero elements of the stationary
density matrix can be interpreted in the following simple way. The
master equation (\ref{eqn1}) in the form (\ref{eqn3}) describes
"flow" of amplitudes of density matrix elements along diagonals in
the direction of photon number decreasing. The "transmittance" of
the transition between $\rho_{n,n+k}$ and $\rho_{n-1,n+k-1}$
equals $T_k(n)$ (see Eq.~(\ref{eqn13})). Elements $\rho_{n,n+k}$
with $F(n)=F(n+k) =0 $ "accumulate" the flow (i.e. they do not
transmit it to the next elements $\rho_{n-1,n+k-1}$). Elements
$\rho_{n,n+k}$ with either $F(n) =0 $ or $F(n+k) =0 $ (but without
the two conditions being satisfied simultaneously) neither
transmit the flow, nor accumulate it; for such elements $T_k(n) =
0$. This interpretation is illustrated in Fig.~\ref{fig1}.

\begin{figure}
\begin{center}
\begin{tabular}{cc}
\textbf{(a)} & \\
\multicolumn{2}{c}{\includegraphics[scale=0.5]{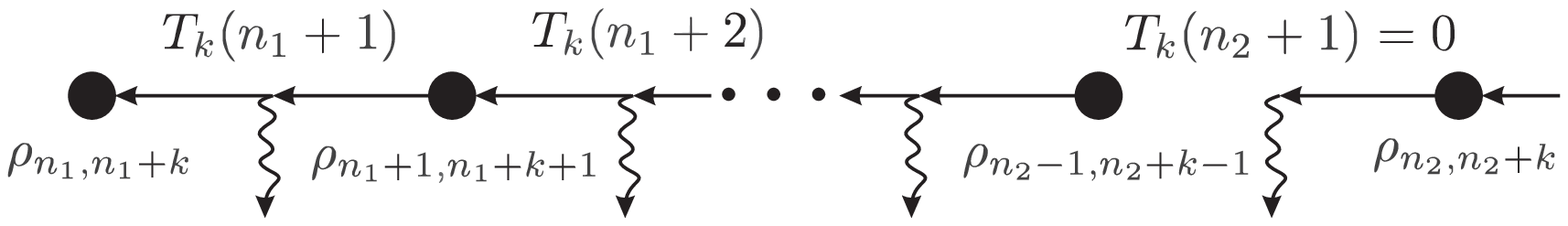}}\\ \\
\textbf{(b)} & \\
\multicolumn{2}{c}{\includegraphics[scale=1]{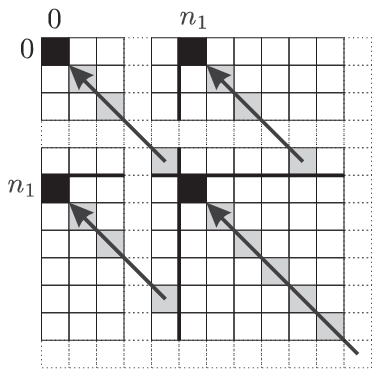}}\\ \\
\end{tabular}
\caption{\textbf{(a)} Interpretation of system evolution in terms
of "flow" of amplitude of density matrix elements. In the shown
example $F(n_1) = F(n_1 + k)=0$ and "flow" is accumulated by the
density matrix element $\rho_{n_1,n_1+k}$; $F(n_2) = 0$ and
$F(n_2+k) \ne 0$ or $F(n_2) \ne 0$ and $F(n_2+k) = 0$ and "flow"
is "absorbed" at the transition between $\rho_{n_2,n_2+k} $ and
$\rho_{n_2-1,n_2+k-1}$ without accumulation. \textbf{(b)}
Schematic representation of the density matrix for the case, when
$F(0) = F(n_1) =0$ and $F(n) \ne 0$ for $n\ne 0, n_1$. Black
squares represent density matrix elements, which are non-zero in
the stationary state. Grey squares represent elements, which give
contribute to non-zero elements of the stationary density matrix.
Arrows indicate direction of "flow". Thick lines correspond to
transitions with zero transmittances.}\label{fig1}
\end{center}
\end{figure}

It should be noted, that, according to the definition
Eq.~(\ref{eqn12}) of "transmittances" $T_k(n)$, amplitudes of
diagonal elements of density matrix are transmitted perfectly:
$T_0(n) \equiv 1$ (i.e. the trace is expectedly preserved).  In
order to attain maximal coherence, transmittances for non-diagonal
elements must be also equal to unity (at least those present in
expressions of the form of Eq.~(\ref{eqn16}) for non-zero elements
of the density matrix). Therefore, the condition of preserving
maximal coherence in the stationary state is
\begin{equation}
\label{eqn17} F(n)=F(n+k),
\end{equation} for density matrix elements giving non-zero
contributions to non-zero elements of the stationary density
matrix.

\section{Examples}
\label{sec2}

\subsection{Generation of Fock states}

The most "natural" nonclassical states that can be generated by
NCL with arbitrarily high fidelity are pure Fock states. Here and
further in this Section we assume that  the initial state is the
 coherent state, $|\alpha \rangle$, with the amplitude $\alpha$.

Suppose that the function $f(n)$ has only one zero $n_1$:
$f(n_1)=0$, $f(n) \ne 0$ for $n\ne n_1$. Then,  the function
$F(n)$ has two zeroes: $F(0)=0$ and $F(n_1) = 0$. According to the
Theorem~\ref{theorem}, only elements $\rho_{00}$, $\rho_{nn}$,
$\rho_{0n} = \rho_{n0}^\ast$ can have non-zero values in
stationary state for the system with such NLC. Eq.~(\ref{eqn16})
implies that  stationary values of on-zero diagonal elements of
the density matrix are described by following expressions:
\begin{equation}
\label{eqn57} \rho_{00}=\sum_{k=0}^{n_1-1} q_k^2(|\alpha|),
\end{equation}
where
\begin{equation}
q_m(|\alpha|) = |\alpha|^m e^{-|\alpha|^2 /2} / \sqrt{m!},
\label{q}
\end{equation}
and $\rho_{n_1n_1}=1- \rho_{00}$.

For large enough amplitudes, $|\alpha|$, of the initial coherent
state the following estimation is valid:
\begin{equation}
\label{eqn59} \rho_{00} < n_1 q_{n_1-1}^2
\xrightarrow[|\alpha|\rightarrow \infty]{}0.
\end{equation}
Therefore, the fidelity $F = \langle{n_1}\mathrel{|{\rho}|
\kern-\nulldelimiterspace}{n_1}\rangle = \rho_{n_1n_1}$ of
generating the Fock state $|n_1\rangle$ can be made arbitrarily
close to unity by choosing large enough amplitude $|\alpha|$ of
the starting coherent state (see Fig.~\ref{fig12}).

\begin{figure}
\begin{center}
\includegraphics[scale=0.7]{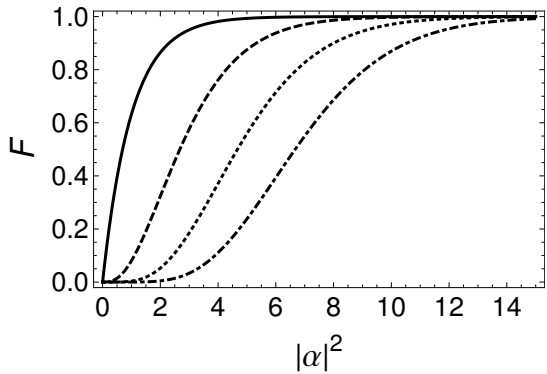}
\caption{Fidelity of generation of the Fock state $|n_1 \rangle$
by NCL starting from the coherent state $|\alpha\rangle$: solid,
dashed, dotted, dash-dotted lines correspond to $n_1=1,3,5,7$.}
\label{fig12}
\end{center}
\end{figure}

\subsection{Superposition of Fock states $|0\rangle$ and $|n\rangle$}

As the first example of coherent superpositions that can be
generated by NCL, we  consider  states maximally close to the
state
\begin{equation}
\label{eqn18} |\Psi_{0n} \rangle = \frac{ |0\rangle + e^{i \phi}
|n\rangle}{\sqrt 2}.
\end{equation}

If we require $F(n) = F(0) = 0$ and $F(m) \ne 0$ for $m\ne 0,n$,
only elements $\rho_{00}$, $\rho_{nn}$, $\rho_{0n} =
\rho_{n0}^\ast$ (see Eq.~(\ref{eqn2})) of the stationary density
matrix, $\rho$, will be non-zero (Theorem~\ref{theorem}). Then,
fidelity of generating the state Eq.~(\ref{eqn18}) is
\begin{equation}
\label{eqn19}\begin{gathered} F = \langle \Psi_{0n} \mathrel{|}
\rho \mathrel{|} \Psi_{0n} \rangle = {} \\  =  \frac{1}{2} +
\operatorname{Re}\left(\rho_{0n}e^{i \phi} \right) \le \frac{1}{2}
+\left|\rho_{0n}\right|.
\end{gathered}
\end{equation}
 The maximal possible value
of fidelity is $F =\frac{1}{2} +\left|\rho_{0n}\right|$.
Positivity of density matrix implies that $|\rho_{0n}| \le
\frac{1}{2}$. For convenience, we shall characterize coherence of
the stationary density matrix by the quantity
\begin{equation}
\label{eqn21} c_{0n} = 2 |\rho_{0n}|, \quad c_{0n} \in [0,1].
\end{equation}
So, maximization of the coherence $c_{0n}$ leads also to
maximality of the fidelity $F$.

For the initial coherent state amplitude $\alpha = |\alpha| e^{i
\phi /n}$ the following equality holds: $\rho_{0n}(0) =
|\rho_{0n}(0)| e^{-i \phi}$. According to Lemma~\ref{lemma1}, this
holds also for any moment of time, $t$. Therefore, in order to
derive conditions for preserving maximal fidelity by the nonlinear
absorption, one needs to carry out maximization of the quantity
$c_{0n}$ over amplitudes $|\alpha|$.

According to Eqs.~(\ref{eqn16}), (\ref{eqn21}) (see also
Fig.~\ref{fig1}(b)), the coherence $c_{0n}$ equals to
\begin{equation}
\label{eqn22} c_{0n}= 2 \sum_{m=0}^{n-1} q_m(|\alpha|)
q_{m+n}(|\alpha|) T_n(m),
\end{equation}
where $q_m(|\alpha|)$ are given by Eq. (\ref{q}). To maximize the
coherence $c_{0n}$ one needs to have $T_n(m) = 1$ for
$m=0,...,n-1$. According to Eq.~(\ref{eqn17}), this condition
implies that
\begin{equation}
\label{eqn23} F(m) = F(m+n)\quad \mbox{for} \quad m=0,...,n-1.
\end{equation}

Further maximization of Eq.~(\ref{eqn22}) can be carried out
numerically. Results of numerical calculations are shown in
Fig.~\ref{fig2}. One can see that the best performance of the
method is achieved for $n=2$. Then, according to
Eq.~(\ref{eqn22}), the coherence is
\begin{equation}
\label{eqn24} c_{02} = \sqrt{2} \left( |\alpha|^2 + |\alpha|^4 /
\sqrt{3} \right) e^{-|\alpha|^2 }.
\end{equation}
The modulus of optimal amplitude of the initial coherent state is
$|\alpha_{opt}| = \sqrt{\frac{1}{2}
\left(2-\sqrt{3}+\sqrt{7}\right)} \approx 1.2$. Elements of the
final density matrix in the optimal case are $\rho_{00} = 0.60$,
$\rho_{22} = 0.40$, $|\rho_{02}| = 0.44$ (see Fig.~\ref{fig2}(b),
and the coherence is $c_{02} = 0.88$.

\begin{figure*}
\begin{center}
\begin{tabular}{cc}
\textbf{(a)} & \textbf{(b)} \\ \\
\includegraphics[scale=0.7]{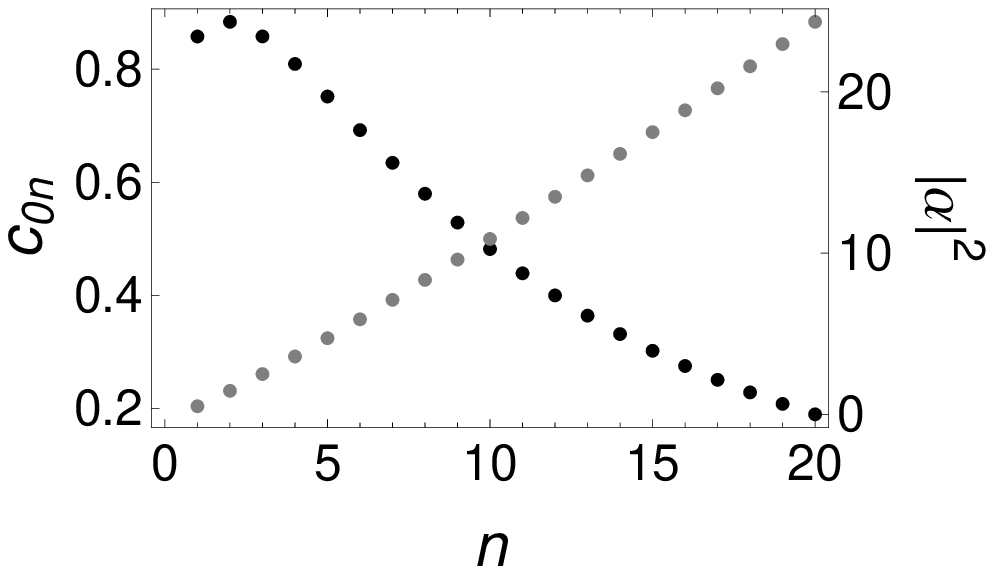} &
\includegraphics[scale=0.6]{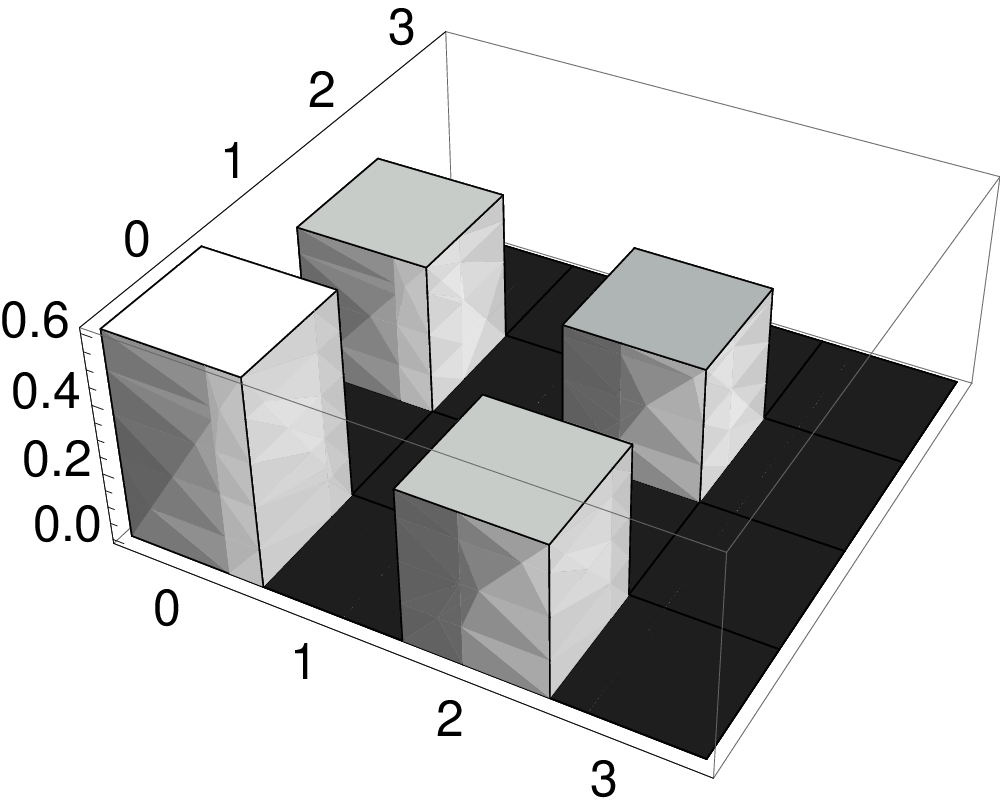}\\ \\
\end{tabular}
\caption{\textbf{(a)} The modulus of the optimal amplitude of the
starting coherent state (grey dots) and the maximal achievable
coherence (black dots) for generation of the state $|\Psi_{0n}
\rangle$. \textbf{(b)} Absolute values of the stationary state
density matrix generated for the initial coherent state optimal
for generation of $|\Psi_{02} \rangle$ .}\label{fig2}
\end{center}
\end{figure*}

\subsection{Superpositions of Fock states $|m\rangle$ and $|n\rangle$}

The next example is the generation of a state maximally close to
the state
\begin{equation}
\label{eqn25} |\Psi_{nm} \rangle = \frac{ |n\rangle + e^{i \phi}
|m\rangle}{\sqrt 2}.
\end{equation}

Notice that there is rather pronounced difference between the
current example and the one considered in the previous Subsection.
According to definition of the function $F(n)$,  $F(0) = 0$ holds
for any system. For any amplitude $\alpha$ of the initial coherent
state the element $\rho_{00}$ of the initial density matrix has
non-zero value. Therefore, according to Eq.~(\ref{eqn16}), this
density matrix element will have non-zero value in the final
state. However, its value for the ideal state $|\Psi_{nm} \rangle$
must be equal to zero.

In order to generate the state sufficiently close to $|\Psi_{nm}
\rangle$, we require $F(m)= F(n) = 0$, $F(k) \ne 0$ for $k\ne
n,m,0$. Only the final density matrix elements $\rho_{ij}$ with
$i,j = 0,n,m$ are non-zero in the stationary state. Similarly to
the previous example, fidelity of the desired state is
\begin{equation}
\label{eqn26} \begin{gathered} F = \langle \Psi_{nm} \mathrel{|}
\rho \mathrel{|} \Psi_{nm} \rangle = {} \\ {} =   \frac{1}{2} +
\operatorname{Re}\left(\rho_{nm}e^{i \phi} \right) - \rho_{00} \le \frac{1}{2}
+\left|\rho_{nm}\right|  - \rho_{00}.
\end{gathered}
\end{equation}
Again, we define the coherence
\begin{equation}
\label{eqn27}  c_{nm} = 2 |\rho_{nm}|, \quad c_{nm} \in [0,1].
\end{equation}
Further, we use coherence, rather than fidelity, for characterization of the state preparation quality.

Phase of the optimal initial coherent state in this case is
defined by the following equation:
\begin{equation}
\label{eqn28} \alpha = |\alpha| e^{i \phi /(m-n)}.
\end{equation}
For preserving maximal coherence and for obtaining
maximal fidelity of the generated state, we need
\begin{equation}
\label{eqn29} F(k) = F(k+m-n)\quad \mbox{for} \quad k=0,...,m-n-1
\end{equation}
(see Eqs.~(\ref{eqn17}), (\ref{eqn23})).

If above conditions are satisfied, the coherence $c_{nm}$ is
described by the following expression:
\begin{equation}
\label{eqn30} c_{nm} = 2 \sum_{k=n}^{m-1} q_k(|\alpha|)
q_{k+m-n}(|\alpha|).
\end{equation}

Optimal values of $|\alpha|$ and maximal achievable coherence
$c_{nm}$ can be found either numerically (dots in
Fig.~\ref{fig3}(a,b)), or by approximate analytical expressions
(lines in Fig.~\ref{fig3}(a,b)).

\begin{figure*}
\begin{center}
\begin{tabular}{cc}
\textbf{(a)} & \textbf{(b)} \\ \\
\includegraphics[scale=0.7]{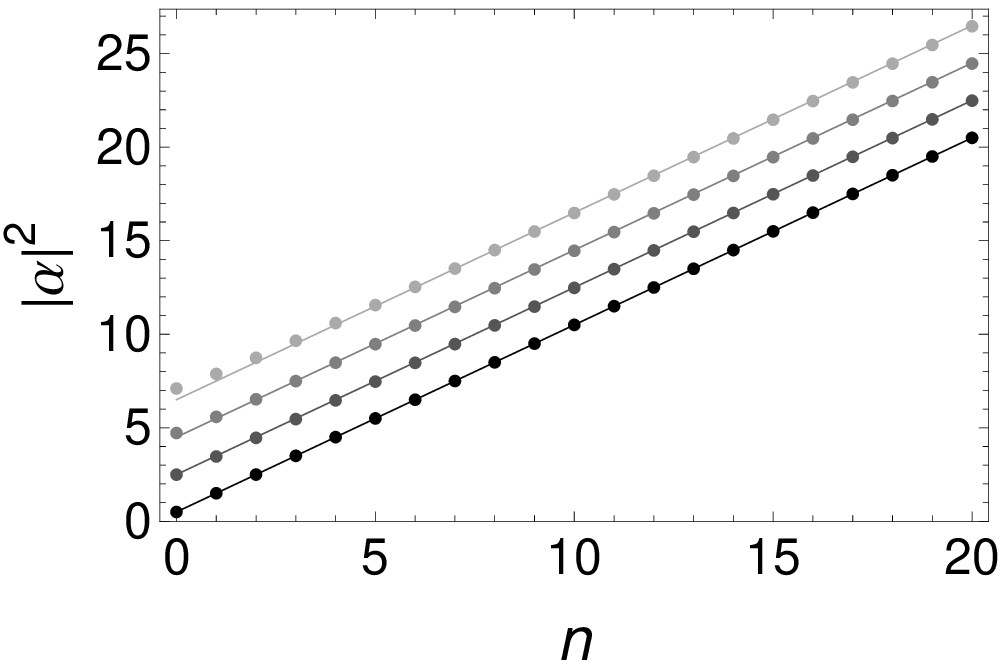} &
\includegraphics[scale=0.7]{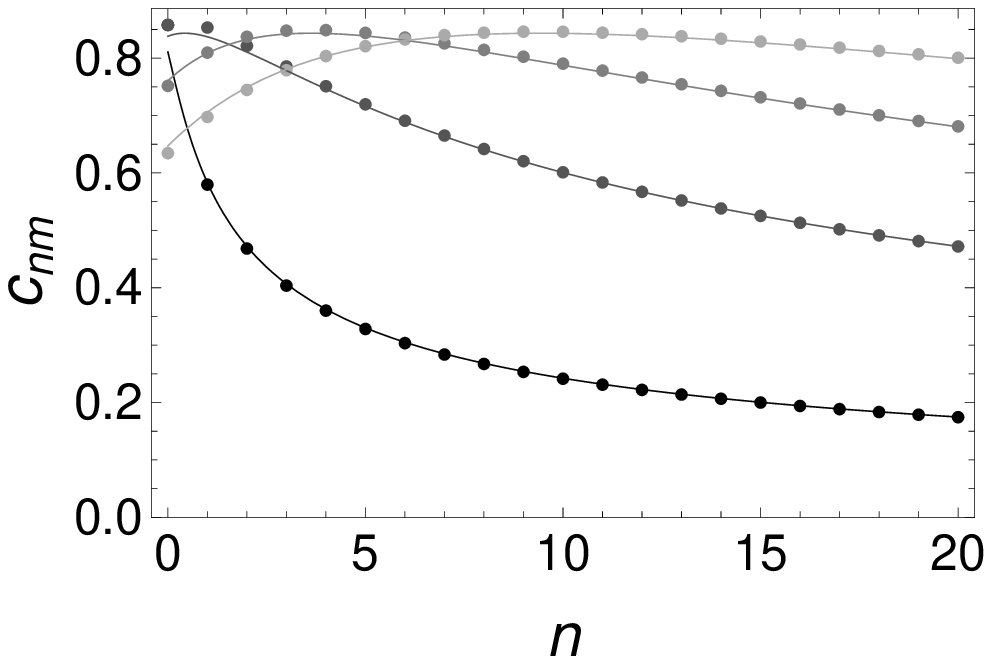}\\ \\
\textbf{(c)} & \textbf{(d)} \\ \\
\includegraphics[scale=0.6]{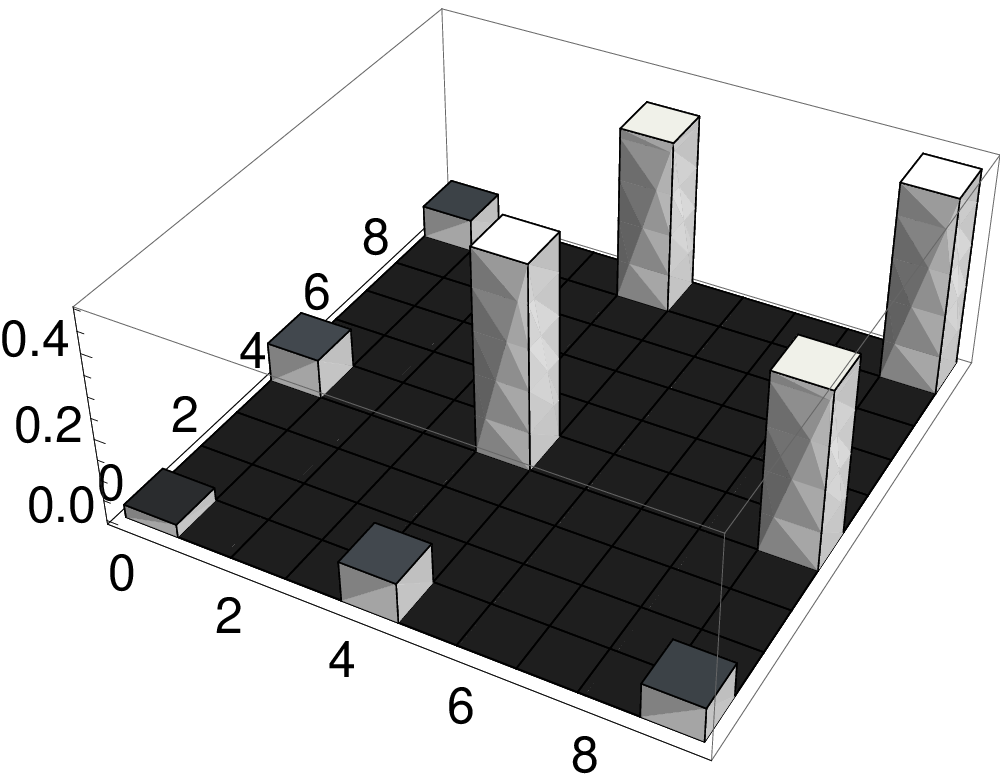} &
\includegraphics[scale=0.6]{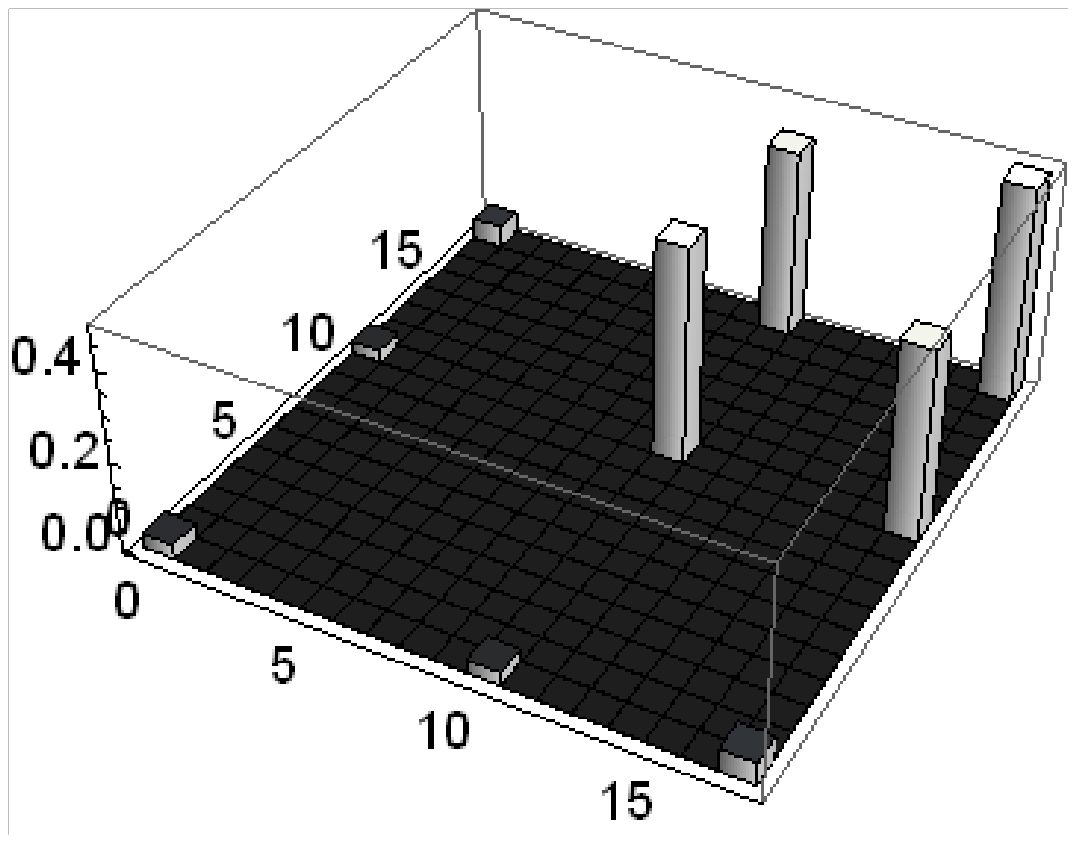}\\ \\
\end{tabular}
\caption{\textbf{(a), (b)} Optimal amplitude of the starting
coherent state (plot (a)) and maximal achievable coherence (plot
(b)) for generation of the state $|\Psi_{nm} \rangle$: dots and
lines corresponds to numerical optimization and approximate
analytical expressions (Eq.~(\ref{eqn33}), Eq.~(\ref{eqn34}));
black, dark grey, grey, light grey  dots and lines correspond to
$m=n+1,n+3,n+5,n+7$. \textbf{(c), (d)} Absolute values of elements
of stationary state density matrices generated by the schemes
optimized for creation of the state $|\Psi_{4,9} \rangle$ (plot
(c)) and $|\Psi_{10,17} \rangle$ (plot (d)).}\label{fig3}
\end{center}
\end{figure*}

To obtain an analytical expression for the coherence $c_{nm}$, we
will take into account that (according to central limit theorem),
for large enough values of $|\alpha|$ the following approximation
is valid for $q_k (|\alpha|)$ given by Eq.(\ref{q}):
\begin{equation}
\label{eqn31} q_k^2 (|\alpha|) \approx \frac{1}{\sqrt{2 \pi}
|\alpha|} \exp \left\{ - \frac{ (k-|\alpha|^2)^2}{2 |\alpha|^2}
\right\}.
\end{equation}
Then, the coherence is approximately equal to
\begin{equation}
\label{eqn32} \begin{aligned} &c_{nm} = \frac{2}{\sqrt{2 \pi}
|\alpha|} \times\\ &\times \sum_{k=n}^{m-1} \exp \left\{ - \frac{
(k-|\alpha|^2)^2 + (k+m-n-|\alpha|^2)^2}{4 |\alpha|^2} \right\}.
\end{aligned}
\end{equation}
To maximize the coherence one needs practically the same
$|\alpha|$, as required for maximalization of the numerator in the
exponent of Eq.~(\ref{eqn32}). This value is
\begin{equation}
\label{eqn33} |\alpha_{opt}|^2 \approx m- \frac{1}{2}.
\end{equation}

Taking into account Eq.~(\ref{eqn33}) and approximating summation
by integration in Eq.~(\ref{eqn32}), one can derive the following
expression for the coherence:
\begin{equation}
\label{eqn34} c_{nm} \approx 2 \operatorname{erf} \left(
\frac{\Delta n} {2 \sqrt{2} |\alpha_{opt}|} \right) \exp \left( -
\frac{\Delta n^2} {8 |\alpha_{opt}|^2} \right),
\end{equation}
where $\Delta n = m-n$. Fig.~\ref{fig3} shows that obtained
approximate expressions represent results of numerical
calculations with sufficiently high accuracy.

The expression Eq.~(\ref{eqn34}) for the coherence $c_{nm}$ has
maximal value 0.84 for $\Delta n^2 \approx 6.4 |\alpha_{opt}|^2$.
Therefore, the method is most suitable for generation of the
states with
\begin{equation}
\label{eqn35} (m-n)^2 \approx 6.4 (m-{\textstyle\frac{1}{2}}).
\end{equation}

Using the approximation Eq.~(\ref{eqn31}), one can show that the
density matrix elements of the optimal stationary state are
described by the following expressions:
\begin{equation}
\label{eqn36} \begin{gathered} \rho_{mm} =\textstyle
\frac{1}{2},\quad \rho_{nn} = \frac{1}{2} \operatorname{erf} ( 2
\zeta), \\
|\rho_{nm}| = \operatorname{erf} (\zeta) e^{-\zeta^2},\quad
\rho_{00} = \textstyle \frac{1}{2}\left\{ 1-  \operatorname{erf} ( 2
\zeta) \right\},
\end{gathered}
\end{equation}
where $\zeta = \frac{\Delta n} {2 \sqrt{2} |\alpha_{opt}|}$.

Examples of density matrices of the states generated by the
schemes optimized for $n=4$, $m=9$ and $n=10$, $m=17$, are shown
in Fig.~\ref{fig3}(c),(d).

\subsection{Superpositions of states with equidistant photon numbers}

Here we consider examples of functions $f(n)$ (and $F(n)$) with
countable sets of zeros. We focus on the case of  functions with
equidistantly distributed zeros:
\begin{equation}
\label{eqn37} f(jN+n_0) = 0, \quad j=0,1,2,...,
\end{equation}
where $N$ is the distance between neighboring zeros and $n_0$, $0
\le n_0 < N$, determines the position of the first zero. It should
be noted that, according to the definition of $F(n)$, one also has
$F(0)=0$ for any function $f(n)$.

For such kind of NCL only  elements $\rho_{nm}$ with $n,m = 0,
n_0, n_0+N, n_0+2N, ...$ remain non-zero in the stationary state.

As in  previously discussed examples, for preserving the maximal
coherence we require that the function $F(n)$ should satisfy
Eq.~(\ref{eqn17}). This implies
\begin{equation}
\label{eqn38} F(n+N) = F(n),\mbox{ for } n=1,2,...
\end{equation}
i.e. function $F(n)$ must be periodic.

In previous examples zeros of $F(n)$ were giving a clue for
choosing the amplitude for the initial coherent state. It is not
so in the current case. Now  the amplitude $|\alpha|$ will be
considered as a free parameter, and the final state will be
investigated as a function of the amplitude. For the sake of simplicity,
we  assume that the amplitude $\alpha$ is real and positive.

According to Eq.(\ref{eqn16}), non-zero elements of the stationary
density matrix are described by the following expression:
\begin{equation}
\label{eqn39} \rho_{nm} = \sum _{k=0}^{N-1} q_{n+k}(|\alpha|)
q_{m+k}(|\alpha|),
\end{equation}
\begin{equation}
\label{eqn40} \rho_{0n} = \sum _{k=0}^{n_0} q_{k}(|\alpha|)
q_{n+k}(|\alpha|), \mbox{ if } n_0 \ne 0.
\end{equation}

For large amplitudes of the initial coherent state $|\alpha|^2 \gg
N$ the approximate expression for $q_n(|\alpha|)$, provided by
Eq.~(\ref{eqn31}), can be used to simplify Eq.~(\ref{eqn39}). One
can show that in this case
\begin{equation}
\label{eqn41} \rho_{nm} = N q_{n}\left(|\alpha'| \right)
q_{m}\left(|\alpha'| \right) \cdot \left\{ 1 + O\left(
\frac{1}{|\alpha|^2} \right)\right\},
\end{equation}
where $|\alpha'|^2 = |\alpha|^2 - (N-1)/2$. Therefore, the
stationary density matrix can be approximated with
\begin{equation}
\label{eqn42} \rho \approx |{\Phi_N^{(n_0)}}\mathrel{\rangle\langle
\kern-\nulldelimiterspace}{\Phi_N^{(n_0)}}|,
\end{equation}
where
\begin{equation}
\label{eqn43} |\Phi_N^{(n_0)}\rangle = const \cdot
\sum_{j=0}^\infty \frac{{\alpha'}^{jN+n_0}}{\sqrt{(jN+n_0)!}}
e^{-|\alpha'|^2/2} |jN+n_0\rangle
\end{equation}
is the state that can be obtained from a coherent state
$|\alpha'\rangle$ by retaining only the states with photon numbers
$n_0$, $n_0+N$, $n_0 + 2N$, ... On the other hand, the state
$|\Phi_N^{(n_0)}\rangle$ can be considered as a superposition of
coherent states, distributed on a circle:
\begin{equation}
\label{eqn44} |\Phi_N^{(n_0)}\rangle = const \cdot \sum_{k=0}^{N-1}
e^{-2 \pi i k n_0 /N} |\alpha' e^{2 \pi i k/N} \rangle.
\end{equation}
For example, for $N=2$ the states $|\Phi_2^{(0)}\rangle$ and
$|\Phi_2^{(1)}\rangle$ are superpositions of states with even and
odd numbers of photon respectively and correspond to the following
superpositions of coherent states with opposite phases:
$|\Phi_2^{(0)}\rangle \sim |\alpha'\rangle+ |-\alpha'\rangle$ and
$|\Phi_2^{(1)}\rangle\sim |\alpha'\rangle- |-\alpha'\rangle$.

To describe "quality" of generation of the superpositions, it is
convenient to consider purity of the final state, defined as
\begin{equation}
\label{eqn45} P = \operatorname{Tr} \rho^2 = \sum_{nm} \rho_{nm}
\rho_{mn}.
\end{equation}
According to Eq.~(\ref{eqn39}), for large $|\alpha|$, when presence
of non-zero element $\rho_{00}$ can be neglected, this expression
can be rewritten as
\begin{equation}
\label{eqn46} P= \sum_{k=0}^{N-1} \left(\sum_n q_{n+k}^2 \right)^2 +
\sum_{k_1 \ne k_2} \left( \sum_n q_{n+k_1} q_{n+k_2} \right)^2,
\end{equation}
where $n=n_0,n_0+N,n_0+2N,...$. Approximating summation by
integration and using Eq.(\ref{eqn31}), one can find the following
expression for the state purity, valid for $|\alpha|^2 \gg N$:
\begin{equation}
\label{eqn47} P \approx 1 - \frac{N^2 - 1}{24 |\alpha|^2}.
\end{equation}
Therefore, in the limit $|\alpha| \rightarrow \infty$ the obtained
stationary state tends to a pure state (Fig.~\ref{fig4}(a)).

\begin{figure*}
\begin{center}
\begin{tabular}{cc}
\textbf{(a)} & \textbf{(b)} \\ \\
\includegraphics[scale=0.7]{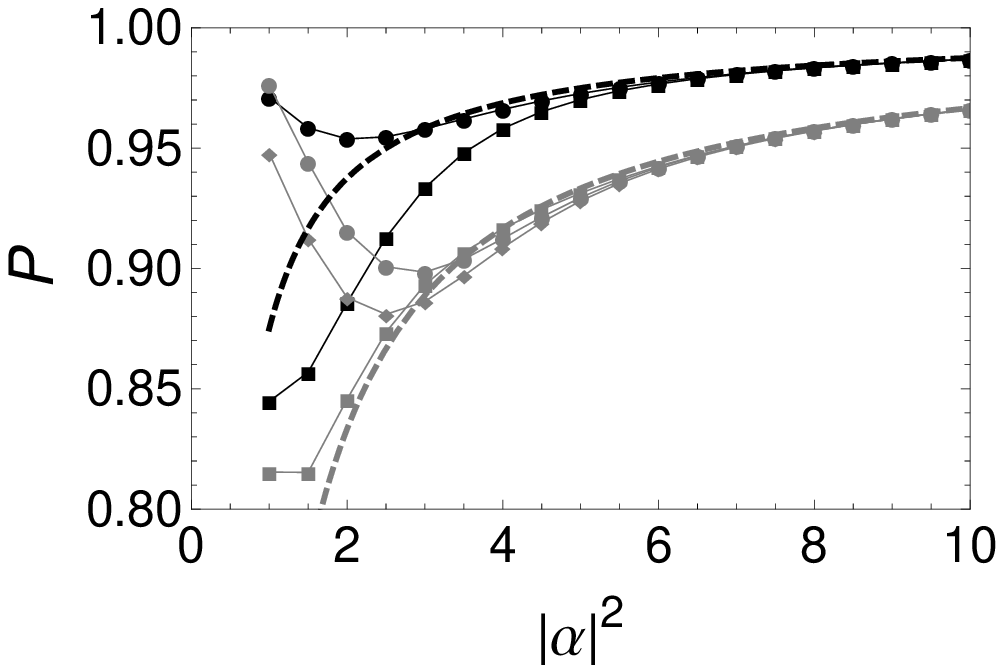} &
\includegraphics[scale=0.6]{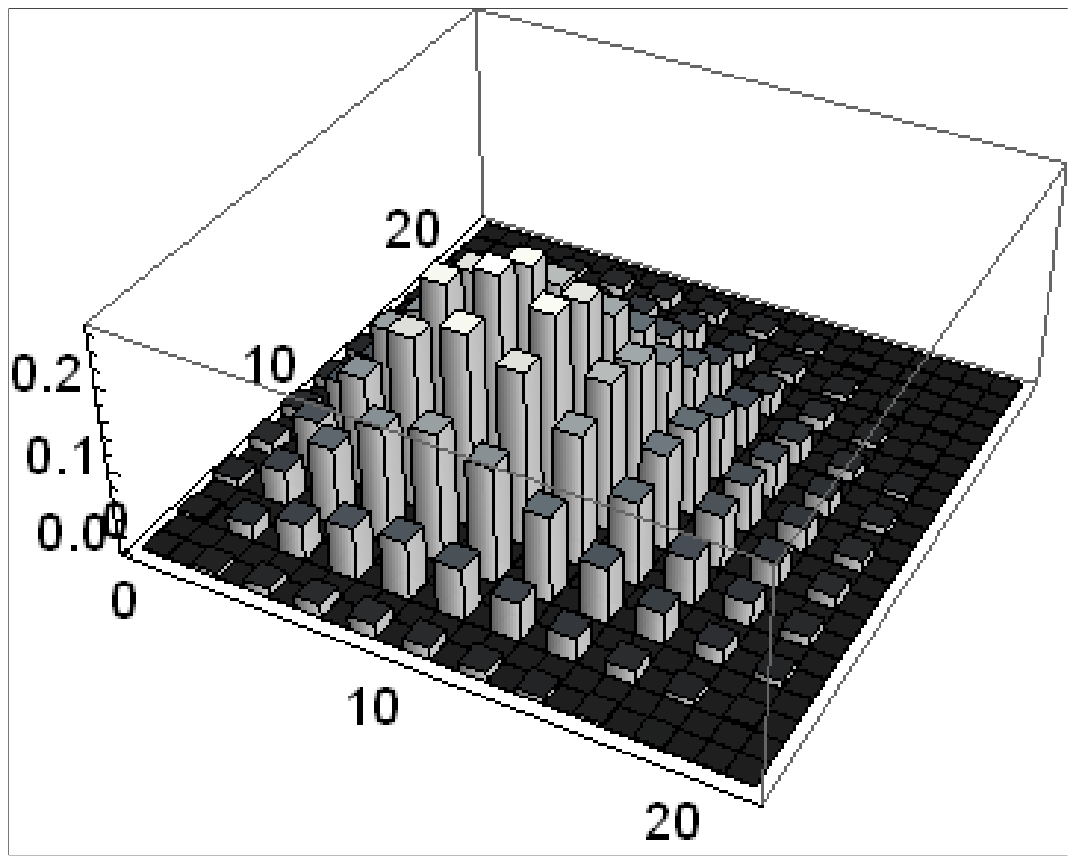}\\ \\
\textbf{(c)} & \textbf{(d)} \\ \\
\includegraphics[scale=0.6]{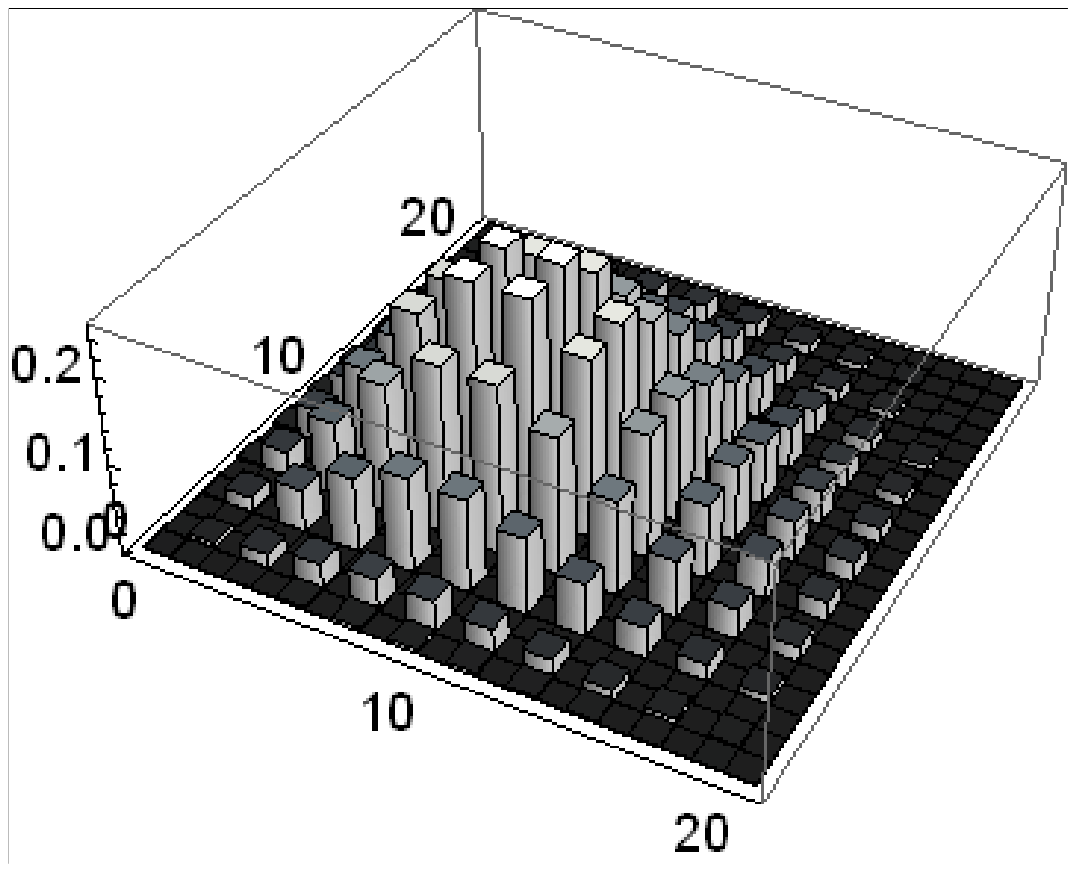} &
\includegraphics[scale=0.6]{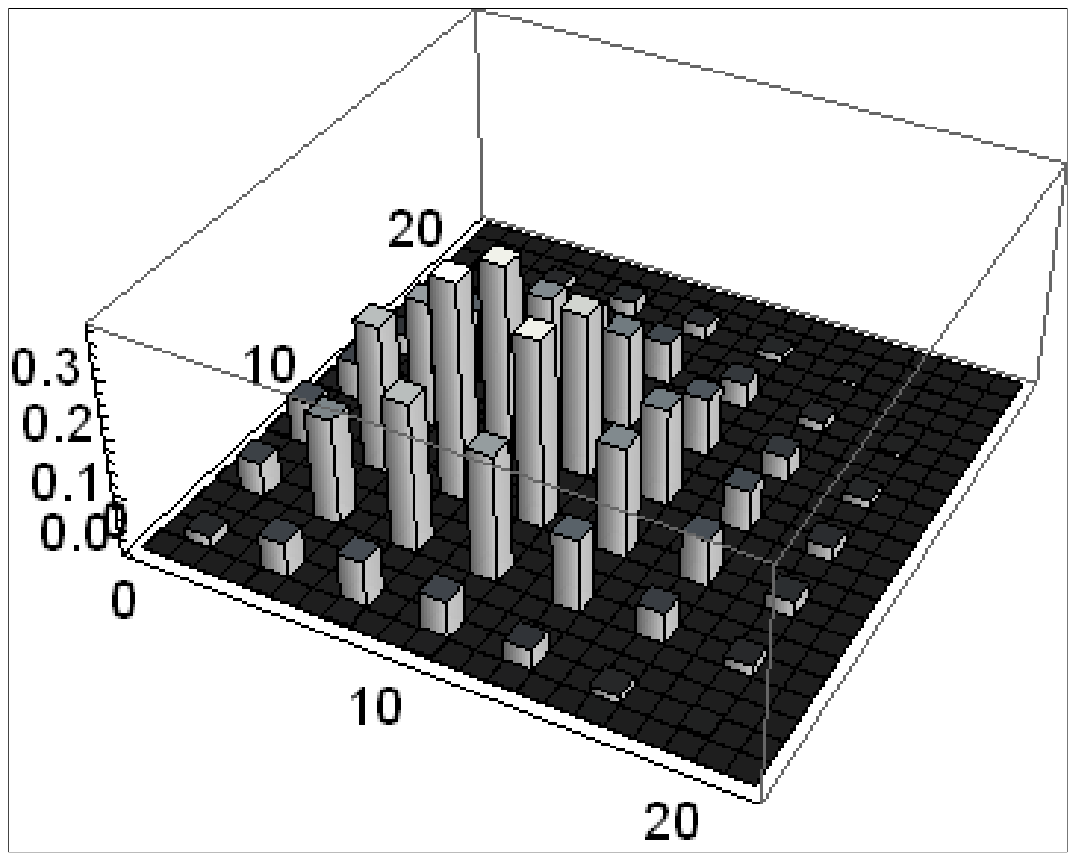}\\ \\
\end{tabular}
\caption{\textbf{(a)} Purity of stationary states, generated by
schemes with the function $F(n)$ having equidistant zeroes: dots
(connected by solid line) correspond to numerical optimization,
dashed lines correspond to the approximate analytical expression
(Eq.~(\ref{eqn47})); black dots and lines, grey dots and lines are
for $N=2,3$; circles, squares, rhombuses are for $n_0 = 0,1,2$.
\textbf{(b)} Matrix elements of the stationary state with even
photon numbers, generated from coherent state with $|\alpha|^2 =
9$ ($N=2$, $n_0=0$). \textbf{(c)} Matrix elements of the
stationary state with odd photon numbers, generated from coherent
state with $|\alpha|^2 = 9$ ($N=2$, $n_0=1$). \textbf{(d)} Matrix
elements of the stationary state with photon numbers, distributed
with interval 3 ($|\alpha|^2 = 9$, $N=3$, $n_0=1$).} \label{fig4}
\end{center}
\end{figure*}

Several examples of density matrix that can be generated by the
method are shown in Fig.~\ref{fig4}(b--d).

\section{Discussion}
\label{sec3}

The discussed examples show that systems with appropriate
nonlinear losses can be used for creation of Fock states
superpositions with sufficiently high fidelity. However, for all
examples obtained stationary states were mixed. It is quite
interesting to find the conditions that must be satisfied for the
final stationary state to be pure and to determine whether such
conditions can be fulfilled.

As stated above, Eq.~(\ref{eqn17}) is one of the conditions
necessary for complete coherence preserving.

Several other conditions are provided by  following lemmas.

\begin{lemma}
\label{lemma3} If all diagonal elements of the initial state
density matrix are positive, the final stationary state can be
pure only if the function $F(n)$ has equidistantly distributed
zeroes.
\end{lemma}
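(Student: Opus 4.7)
The plan is to combine the explicit stationary formula~(\ref{eqn16}) with the rank-one identity forced by purity, and to show that any mismatch of consecutive gaps between the zeros of $F$ makes a Cauchy--Schwarz bound strict. Enumerate the zeros of $F$ as $Z=\{0=z_0<z_1<z_2<\cdots\}$ with gaps $g_j=z_j-z_{j-1}$; the target is $g_1=g_2=\cdots$. By Theorem~\ref{theorem} the stationary density matrix is supported on $Z$, and since $T_0(n)\equiv 1$, formula~(\ref{eqn16}) yields $\rho_{z_i,z_i}(\infty)=\sum_{m=z_i}^{z_{i+1}-1}\rho_{mm}(0)$, which is strictly positive by the hypothesis on initial diagonals. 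Hence if $\rho(\infty)=|\psi\rangle\langle\psi|$ with $\psi=\sum_i\alpha_i|z_i\rangle$, every $\alpha_i\ne 0$, and rank one yields the $2\times 2$ minor identity $|\rho_{z_i,z_{i+1}}(\infty)|^2=\rho_{z_i,z_i}(\infty)\,\rho_{z_{i+1},z_{i+1}}(\infty)$ for every $i$.

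To exploit this identity I would apply~(\ref{eqn16}) to the pair $(z_i,z_{i+1})$ with $k=g_{i+1}$; the cutoff is the smallest $n>z_i$ at which $F(n)=0$ or $F(n+g_{i+1})=0$, the only candidates being $z_{i+1}$ (from $F$) and $z_{i+2}-g_{i+1}=z_i+g_{i+2}$ (from $F(\cdot+g_{i+1})$), so $n_2=\min(z_{i+1},z_i+g_{i+2})$. Chaining the triangle inequality, the positivity bound $|\rho_{m,m+k}(0)|\le\sqrt{\rho_{mm}(0)\,\rho_{m+k,m+k}(0)}$, the universal inequality $T_k(l)\le 1$ (immediate from the AM--GM form of definition~(\ref{eqn12})), and finally Cauchy--Schwarz on the resulting sum produces
\begin{equation*}
|\rho_{z_i,z_{i+1}}(\infty)|^2\le\Bigl(\sum_{m=z_i}^{n_2-1}\rho_{mm}(0)\Bigr)\Bigl(\sum_{m=z_{i+1}}^{n_2+g_{i+1}-1}\rho_{mm}(0)\Bigr).
\end{equation*}

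A short case analysis in the sign of $g_{i+2}-g_{i+1}$ then closes the proof. If $g_{i+2}\ge g_{i+1}$ then $n_2=z_{i+1}$, the first factor of the bound agrees with that of $\rho_{z_i,z_i}(\infty)\,\rho_{z_{i+1},z_{i+1}}(\infty)$, and its second factor contains $g_{i+1}$ summands against the $g_{i+2}$ summands demanded by purity; strict positivity of all $\rho_{mm}(0)$ then forces $g_{i+2}=g_{i+1}$. If instead $g_{i+2}<g_{i+1}$ then $n_2=z_i+g_{i+2}$, the two second factors now agree (each has $g_{i+2}$ positive summands starting at $z_{i+1}$), and comparison of the first factors forces $g_{i+2}=g_{i+1}$ in the same way. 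Iterating in $i$ yields equidistance; the terminal situation with only finitely many zeros is treated identically, since the purity factor $\rho_{z_{i+1},z_{i+1}}$ then sums over the infinite tail $m\ge z_{i+1}$ while the bound stays finite, yielding an even stronger contradiction. The main delicate point I anticipate is precisely the bookkeeping of these two summation windows and the correct identification of $n_2$ in every configuration.
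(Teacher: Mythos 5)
Your proof is correct and takes essentially the same route as the paper's: both rest on the stationary formula (\ref{eqn16}) combined with the Cauchy--Schwarz bound of the type (\ref{eqn51}) and the observation that a mismatch between two successive gaps of the zeros of $F(n)$ leaves strictly positive initial diagonal elements unmatched, forcing $\rho_{n_1,n_1}\rho_{n_2,n_2}>|\rho_{n_1,n_2}|^2$ and hence contradicting purity. The only differences are cosmetic --- the paper compares two arbitrary pairs of successive zeroes ordered by gap size, whereas you compare adjacent pairs (with an explicit case analysis on the cutoff $n_2$ and the bound $T_k\le 1$, which Eq.~(\ref{eqn49}) of the paper absorbs silently) and then iterate.
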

\begin{proof}
Suppose $n_1$, $n_1'$ and $n_2$, $n_2'$ are two pairs of successive
zeroes of the function $F(n)$, and $n_1'-n_1 < n_2' - n_2$. Either
of $n_1'$, $n_2'$ can be equal to infinity, if the corresponding
zero $n_1$ or $n_2$ is the last zero of $F(n)$.

Then, according to Eq.~(\ref{eqn16}), the density matrix elements
have the following values in the stationary state:
\begin{equation}
\label{eqn48} \rho_{n_1,n_1} = \sum_{k=0}^{n_1'-n_1-1}
\rho_{n_1+k,n_1+k}(0),
\end{equation}
\begin{equation}
\label{eqn49} \rho_{n_1,n_2} = \sum_{k=0}^{n_1'-n_1-1}
\rho_{n_1+k,n_2+k}(0),
\end{equation}
\begin{equation}
\label{eqn50} \begin{aligned} \rho_{n_2,n_2} =
\sum_{k=0}^{n_2'-n_2-1} \rho_{n_2+k,n_2+k}(0) >{} \\  {} >
\sum_{k=0}^{n_1'-n_1-1} \rho_{n_2+k,n_2+k}(0) ,
\end{aligned}
\end{equation}
where we have taken into account positivity of diagonal elements
of the initial state density matrix. On the other hand, the
following inequality holds:
\begin{equation}
\label{eqn51}\begin{gathered} \Bigl( \sum_{k} \rho_{n_1+k,n_1+k}(0)
\Bigr) \Bigl(\sum_{k} \rho_{n_2+k,n_2+k}(0))\Bigr) \ge \\ \ge
\Bigl(\sum_{k} \sqrt{ \rho_{n_1+k,n_1+k}(0) \rho_{n_2+k,n_2+k}(0)}\Bigr)^2 \ge \\
\ge \Bigl| \sum_k \rho_{n_1+k,n_2+k}(0) \Bigr|^2,
\end{gathered}
\end{equation}
when all the sums are taken over the same range of $k$.

Therefore, assumption of nonequal distances $n_1'-n_1$ and
$n_2'-n_2$ between successive zeroes $n_1$, $n_1'$ and $n_2$, $n_2'$
leads to the following inequality:
\begin{equation}
\label{eqn52} \rho_{n_1,n_1} \rho_{n_2,n_2} > | \rho_{n_1,n_2}|^2,
\end{equation}
which manifests that the state cannot be pure. Finally, the
distance between any neighboring zeroes must be equal to some
constant value, which was denoted by $N$ in the last example of
Section~\ref{sec2}.
\end{proof}

\begin{lemma}
Let $n_1$, $n_1'$ and $n_2$, $n_2'$ be two pairs of successive
zeroes of the function $F(n)$, $n_1'-n_1 = n_2' - n_2$, and let
all diagonal elements of the initial state density matrix be
positive. The final stationary state can be pure only if the
density matrix of the initial state satisfies the following
condition:
\begin{equation}
\label{eqn53}\frac{ \rho_{n_1+k,n_1+k}(0)}{ \rho_{n_2+k,n_2+k}(0)} =
\frac{\rho_{n_1,n_1}(0)}{ \rho_{n_2,n_2}(0)} \mbox{ for }
k=0,...,n_1'-n_1-1.
\end{equation}
\end{lemma}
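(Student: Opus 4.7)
The plan is to use the previous lemma together with the positivity/purity characterization of density matrices to turn the chain of inequalities already displayed in Eq.~(\ref{eqn51}) into an equality constraint, and then read off the proportionality condition Eq.~(\ref{eqn53}) as the equality case of the Cauchy--Schwarz inequality.

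First, I would set the stage using Lemma~\ref{lemma3} and Eq.~(\ref{eqn17}). Purity of the stationary state forces equidistant zeros of $F$, and also forces every transmittance $T_N(n)$ appearing in the relevant instance of Eq.~(\ref{eqn16}) to equal unity (otherwise the off-diagonal amplitudes get attenuated relative to the diagonal ones and mixedness appears). Under these two conditions the stationary values reduce to plain sums over the initial density matrix as in Eqs.~(\ref{eqn48})--(\ref{eqn50}), with no $T$-factors.

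Second, I would invoke the standard fact that a density matrix represents a pure state if and only if every principal $2\times 2$ minor degenerates, i.e. $|\rho_{ij}|^2 = \rho_{ii}\rho_{jj}$ for every pair of indices with nonzero entries. Applying this to the surviving entries $\rho_{n_1,n_1}(\infty)$, $\rho_{n_2,n_2}(\infty)$, $\rho_{n_1,n_2}(\infty)$ and substituting Eqs.~(\ref{eqn48})--(\ref{eqn50}) (with the transmittance factors set to one, and summations over the common range $k=0,\dots,n_1'-n_1-1$) yields
\begin{equation*}
\Bigl(\sum_k \rho_{n_1+k,n_1+k}(0)\Bigr)\Bigl(\sum_k \rho_{n_2+k,n_2+k}(0)\Bigr) = \Bigl|\sum_k \rho_{n_1+k,n_2+k}(0)\Bigr|^2.
\end{equation*}

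Third, I would compare this equality with the chain of inequalities already written out in Eq.~(\ref{eqn51}). Since the two ends coincide, each intermediate inequality must saturate. The first step is precisely the Cauchy--Schwarz inequality applied to the sequences $a_k=\sqrt{\rho_{n_1+k,n_1+k}(0)}$ and $b_k=\sqrt{\rho_{n_2+k,n_2+k}(0)}$; saturation forces $a_k=\lambda b_k$ for some $k$-independent $\lambda>0$. Squaring this proportionality gives exactly Eq.~(\ref{eqn53}), with $\lambda^2 = \rho_{n_1,n_1}(0)/\rho_{n_2,n_2}(0)$ (positive because all diagonal entries are assumed positive).

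The only place where a subtlety arises is justifying the reduction in the first step — one must argue that if any $T_N(n)$ appearing in Eq.~(\ref{eqn16}) were strictly less than unity, the off-diagonal amplitude would strictly decrease while the diagonal sums are unchanged (since $T_0\equiv 1$), forcing a strict inequality $|\rho_{n_1,n_2}(\infty)|^2<\rho_{n_1,n_1}(\infty)\rho_{n_2,n_2}(\infty)$ already at this level and contradicting purity. Once this is observed the rest is a direct application of the equality case of Cauchy--Schwarz, which is the only routine calculation required.
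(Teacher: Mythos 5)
Your proof is correct and follows essentially the same route as the paper's: substitute the stationary-state expressions Eqs.~(\ref{eqn48})--(\ref{eqn50}) into the purity condition Eq.~(\ref{eqn54}) and saturate the Cauchy--Schwarz chain Eq.~(\ref{eqn51}), whose equality case forces the collinearity of $\{\rho_{n_1+k,n_1+k}(0)\}$ and $\{\rho_{n_2+k,n_2+k}(0)\}$ stated in Eq.~(\ref{eqn53}). Your additional observation that any transmittance $T_{n_2-n_1}(\cdot)<1$ would already strictly attenuate $|\rho_{n_1,n_2}|$ and preclude purity fills in a step the paper leaves implicit, but does not change the argument.
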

\begin{proof}
Elements $\rho_{n_1,n_1}$, $\rho_{n_1,n_2}$, $\rho_{n_2,n_2}$ of
the stationary state density matrix are described by
Eqs.~(\ref{eqn48}),(\ref{eqn49}) and the first line of
Eq.~(\ref{eqn50}).

The condition
\begin{equation}
\label{eqn54} \rho_{n_1,n_1} \rho_{n_2,n_2} = | \rho_{n_1,n_2}|^2,
\end{equation}
which is necessary for the state purity, is satisfied only if all
parts of Eq.~(\ref{eqn51}) are equal to each other. Obviously,
equality can be achieved only if "vectors" $\{
\rho_{n_1+k,n_1+k}(0) \}$ and $\{ \rho_{n_2+k,n_2+k}(0) \}$ are
collinear, which implies Eq.~(\ref{eqn53}).
\end{proof}

On the basis of these lemmas the following theorem can be proved.

\begin{theorem}
If the function $F(n)$ has at least one zero $n_1$, $n_1>0$, and
the initial state is classical, the final stationary state will be
mixed.
\end{theorem}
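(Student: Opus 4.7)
The plan is to reduce the theorem to Lemmas~\ref{lemma3} and~4 by decomposing the classical initial state into coherent states and exploiting the linearity of the master equation. Because $\rho(0)$ is classical, its Glauber--Sudarshan representation (Eq.~(\ref{eqn10})) has a non-negative weight, $\rho(0)=\int P(\alpha)\,|\alpha\rangle\langle\alpha|\,d^2\alpha$ with $P(\alpha)\ge 0$. The linearity of Eq.~(\ref{eqn1}) propagates this to $\rho(\infty)=\int P(\alpha)\,\rho_\alpha(\infty)\,d^2\alpha$, where $\rho_\alpha(\infty)$ denotes the stationary density matrix reached from the coherent input $|\alpha\rangle$. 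Since a pure density operator is an extreme point of the convex set of states, $\rho(\infty)$ can be pure only if $\rho_\alpha(\infty)$ equals that same pure state for $P$-almost every $\alpha$. It therefore suffices to prove that $\rho_\alpha(\infty)$ is mixed for every $\alpha\ne 0$.

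Fix $\alpha\ne 0$. The coherent state has strictly positive diagonals $\rho_{nn}(0)=|\alpha|^{2n}e^{-|\alpha|^2}/n!>0$, so the hypotheses of Lemmas~\ref{lemma3} and~4 are satisfied. If the zeros of $F$ are not equidistantly spaced, Lemma~\ref{lemma3} directly forbids purity: the pair $(0,n_1)$ of successive zeros together with any subsequent pair (possibly ending at $\infty$ if $n_1$ is the last zero) furnishes unequal gaps, yielding $\rho_{00}\rho_{n_1 n_1}>|\rho_{0,n_1}|^2$. In the complementary case the zeros of $F$ must form the arithmetic set $\{0,N,2N,\ldots\}$ for some integer $N$ dividing $n_1$; the degenerate alternative $N=1$ would force $F\equiv 0$ and no dynamics and can be excluded at the outset. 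I would then apply Lemma~4 to the successive pairs $(0,N)$ and $(N,2N)$: its proportionality condition reads $\rho_{k,k}(0)/\rho_{N+k,N+k}(0)=\rho_{0,0}(0)/\rho_{N,N}(0)$ for $k=0,\ldots,N-1$. Substituting the Poissonian diagonals, the left-hand side becomes $(N+k)!/(k!\,|\alpha|^{2N})$, which is strictly increasing in $k$, whereas the right-hand side is its value at $k=0$. The equality already fails at $k=1$, so $\rho_\alpha(\infty)$ is again mixed.

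Combining both cases shows $\rho_\alpha(\infty)$ is mixed for every $\alpha\ne 0$, so the extreme-point argument of the first paragraph forces $\rho(\infty)$ itself to be mixed unless $P$ is concentrated at the origin; the latter corresponds to the vacuum initial state, which is a trivial pure fixed point of the dynamics tacitly excluded by the theorem. The main conceptual obstacle is the first paragraph: to run the extreme-point argument one must work with a genuine probability measure $P$ rather than the distributional $P$-functions that appear for nonclassical states, and this is precisely where classicality of $\rho(0)$ is used. Once this is granted, what remains is the routine verification that the rigid Poisson weights $|\alpha|^{2n}/n!$ are incompatible with Lemma~4's collinearity requirement.
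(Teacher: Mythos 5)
Your proof is correct and follows essentially the same route as the paper's: reduce to coherent inputs via the positive $P$-representation and convexity, then show that the Poissonian diagonal ratios $(N+k)!/(k!\,|\alpha|^{2N})$ depend on $k$ and hence violate the collinearity condition Eq.~(\ref{eqn53}) except in the trivial case $N=1$, $F\equiv 0$. You are somewhat more explicit than the paper about the extreme-point argument, the case split between Lemma~\ref{lemma3} and the collinearity lemma, and the vacuum/$F\equiv 0$ edge cases, but the substance is identical.
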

\begin{proof}

Any classical initial state can be represented as a mixture of
coherent states with positive weights. Therefore, if we prove that
the stationary state will be mixed for an arbitrary initial
coherent state, the stationary state will be proven to be mixed
for any classical starting state.

For the initial state being a coherent state $|\alpha \rangle$ the
following statement holds:
\begin{equation}
\label{eqn55} \frac{ \rho_{n_1+k,n_1+k}(0)}{ \rho_{n_2+k,n_2+k}(0)}
= |\alpha|^{2(n_2-n_1)} \frac{(n_2+k)!}{(n_1+k)!}.
\end{equation}
Therefore, Eq.~(\ref{eqn53}) is not satisfied except for the trivial
case, when $n_1'-n_1 = 1$, $F(n) \equiv 0$.
\end{proof}

However, when the amplitude $|\alpha|$ of the initial coherent
state is large enough, the generated state can be very close to a
pure state. Indeed, only the density matrix elements
$\rho_{n_1,n_2}(0)$ with $|n_1-|\alpha|^2| \lesssim |\alpha|$,
$|n_2-|\alpha|^2| \lesssim |\alpha|$ have significantly nonzero
values. For $n_1,n_2 \gg k$ Eq.~(\ref{eqn55}) can be transformed
into
\begin{equation}
\label{eqn56}\begin{gathered} \frac{ \rho_{n_1+k,n_1+k}(0)}{
\rho_{n_2+k,n_2+k}(0)} \approx |\alpha|^{2(n_2-n_1)}
\left(\frac{n_2}{n_1}\right)^{k} \approx \\ \approx
|\alpha|^{2(n_2-n_1)} \left(1+ k \frac{n_2-n_1}{n_1} \right) = \\ =
|\alpha|^{2(n_2-n_1)} \left\{1+ O\left(\frac{1}{|\alpha|}\right)
\right\}.
\end{gathered}
\end{equation}
Therefore, Eq.~(\ref{eqn53}) can be satisfied with arbitrarily
high precision by using high enough amplitudes of the starting
coherent state.

\section{Conclusions}

We have investigated the nonlinear coherent loss as a resource for
generating non-classical states. We have established conditions
for generating a prescribed state (namely, arbitrarily Fock states
and specific superpositions of them) from the initial coherent
state. We have highlighted a connection between properties of the
Lindblad operator of NCL and the generated state. We have
demonstrated that the state generated by NCL from the initial
classical state will always be mixed. However, for certain classes
of states one can generate almost pure states. In some cases it is
possible to reach high fidelity of the generation by appropriately
choosing an amplitude of the initial coherent state. Fock states
belong to such a class; one can generate an almost pure Fock state
for an initial coherent state of a sufficiently high amplitude.
The situation complicates for finite superpositions of Fock
states. For example, a superposition of two Fock states cannot be
generated by NCL with the arbitrarily high fidelity from the
initial coherent state; a strict upper border exists for this
case. However, certain infinite superpositions can also be
generated with an arbitrarily high fidelity. For example, one can
"comb" the initial coherent state cutting off Fock state
components with odd number of particles with an arbitrarily high
fidelity.

Finally, it should be noted that though coherent states are
"nonideal" for generating states via the NCL, they remain optimal
initial classical states. Any classical state is a mixture of
coherent states with positive weights. Quantum mechanical
equations for evolution of the density matrix  are always linear.
Therefore, any final state obtained from a classical state, is
necessarily a mixture of final states which are to be obtained
from corresponding initial coherent states. Thus, no classical
initial state can lead to purity of the final state, greater than
the purity, provided by "the best choice" from the possible the initial coherent states.

The authors acknowledge the financial support by the BRFFI of
Belarus. They are very grateful to V. S. Shchesnovich for helpful
discussions.

\begin{appendix}
\section{Designing the NCL by correlated loss}

To highlight the concept of designing nonlinear loss in systems of
coupled bosonic modes (which can be realized, for example, in
Bose-Einstein condensates \cite{valera} or in optical fibers
\cite{mogilevtsev opt lett}), let us consider a model of $N+1$
bosonic modes, $a_1\ldots a_{N+1}$ coupled in the usual linear way
to the same Markovian reservoir. Thus, the model is described by
the master equation in the standard Lindblad form:
\begin{equation}
\frac{d}{dt}\rho_{N+1}=-\frac{i}{\hbar}[H_{N+1},\rho_{N+1}]+
\gamma\mathcal{L}_{N+1}\rho_{N+1}, \label{lindblad multimode}
\end{equation} where $H_{N+1}$ is the Hamiltonian describing
unitary evolution of modes described by the annihilation, $a_j$, and
creation, $a_j^{\dagger}$, operators; $\gamma >0$ is the linear
decay rate. Here and in the derivation below we define
superoperators $\mathcal{L}_j$ on the basis of Lindblad operators
$L_j$ in the following way:
\[\mathcal{L}_j\rho=2L_j\rho L_j^{\dagger} - L_j^{\dagger}L_j\rho
-\rho L_j^{\dagger}L_j.
\]
In Eq.~(\ref{lindblad multimode}) the Lindblad operator is
\[L_{N+1} \equiv b_{N+1}=\sum\limits_{j=1}^{N+1}u_{N+1,j}a_j,
\quad \sum\limits_{j=1}^{N+1}|u_{N+1,j}|^2=1. \] It depends linearly
on the annihilation operators $a_j$ and, therefore, represents a
collective mode. One can introduce a new set of independent bosonic
operators, $b_j$, by the unitary transformation $u_{i,j}$:
\[b_{k}=\sum\limits_{j=1}^{N+1}u_{k,j}a_j,\]
where coefficients $u_{N+1,j}$ are fixed by the interaction between
the system and the reservoir. The Hamiltonian, $H_{N+1}$, can be
decomposed in terms of annihilation, $b_{N+1}$, and creation,
$b_{N+1}^\dagger$, operators of the collective mode:
\begin{equation}
H_{N+1}=\sum\limits_{m,n=0}^{\infty}
F_{mn}\left(b_{N+1}^{\dagger}\right)^mb_{N+1}^n,
\label{hamiltonian transformed}
\end{equation}
where operators $F_{m,n}$ depend only on operators $b_j$ and
$b_j^\dagger$ for $j<N+1$.

Now we assume that the state of the collective mode $b_{N+1}$ decays
to the vacuum very rapidly on the time-scale of dynamics prescribed
by the Hamiltonian $H_{N+1}$.Thus, an adiabatic elimination of the
mode $b_{N+1}$ can be made resulting in the following equation for
the reduced density matrix of modes $b_j$ for $j=1\ldots N$:
\begin{equation}
 \frac{d}{dt}\rho_{N}=-\frac{i}{\hbar}[F_{0,0},\rho_{N}]+
\sum\limits_{n=1}^{\infty} \frac{n!}{(n+1)\gamma}
\mathcal{L}_{n}\rho_{N}, \label{lindblad multimode reduced}
\end{equation}
where Lindblad operators are $L_n=F_{0,n}$. The
master equation (\ref{lindblad multimode reduced}) describes both
possible interaction between modes $b_j$ and their nonlinear
losses. Note that $F_{0,0}$ might include nonlinearities not
present in the original Hamiltonian, $H_{N+1}$.

Even for low order nonlinearities the scheme described above is
able to lead to the appearance of NCL. In Ref.
\cite{mogilevtsev opt lett} it was shown that the scheme
(\ref{lindblad multimode}) for the case of just two bosonic modes
subject to Kerr nonlinearity leads to appearance of NCL with
$\hat L={\hat a}\hat n$.

\end{appendix}

\end{document}